\newcommand{\cross}{\times}
\newcommand{\union}{\cup}
\newcommand{\intersection}{\cap}
\newsavebox{\savepar}
\newtheorem{theorem}{Theorem}
\newtheorem{lemma}{Lemma}
\newtheorem{definition}{Definition}
\newenvironment{proof}[1][Proof]{\textbf{#1.} }{\ \rule{0.5em}{0.5em}}
\newcommand{\inpart}{\ensuremath{\Psi}}
\newcommand{\outpart}{\ensuremath{\Omega}}
\begin{document}


\title{On the Preservation of Input/Output Directed Graph Informativeness under Crossover}
\author{Andreas Duus Pape, J. David Schaffer, Hiroki Sayama, Christoper Zosh}

\maketitle

\begin{abstract}
There exists a broad class of networks that connect inputs to outputs. These networks include chemical transformation networks, electrical circuits, municipal water systems, and neural networks. The goals of this paper are to provide a  theoretical foundation for  evolutionary crossover on this class of
graphs  and connect crossover to \textit{informativeness}, a measure of the connectedness of inputs to outputs. Informativeness is defined as: a \textit{partially informative} graph has at least one path from an input to some output, 
 a \textit{very informative} graph has a path from every input to some output, 
 and a \textit{fully informative} graph has  a path from every input to every output. If a neural network with non-zero weights and any number of layers is fully informative. As links are removed (assigned zero weight), it may become very, partially, or not informative. (The complement of informativeness is \textit{actionability}, which is a measure of how connected outputs are from inputs.)

We define a crossover operation on IOD Graphs
in which we find subgraphs with matching sets of forward and backward directed links to ``swap.''
With this operation, IOD Graphs can be subject to evolutionary computation methods. 
We show that fully informative parents may yield a non-informative child. We also show that under certain conditions, 
crossover compatible,
partially informative parents yield partially informative children, and very informative input parents with partially informative output parents yield very informative children. However, even under these conditions, full informativeness may not be retained. Similar results hold for actionability.

\end{abstract}


\section{Introduction} 
\label{sec:introduction}

There is a broad class of networks which connect inputs to outputs, used for modeling and problem solving in a variety of domains. Examples include chemical reaction networks \citep[e.g.,][]{unsleber2020exploration, wen2023chemical},  municipal water systems \citep[e.g.,][]{shinstine2002reliability, wu2009evolving},  data flow networks \citep[e.g.,][]{meng2004characterizing}, and electrical circuits \citep[e.g.,][]{koza1997reuse, shanthi2009practical}. We call these graphs Input/output Directed Graphs or IOD Graphs.  A common IOD Graph is the multi-layer perceptron, which is a feed-forward IOD Graph that converts inputs into ``useful'' outputs. Perceptrons are used to solve a variety of problems; for example, in cognitive science they are used to emulate human categorization behavior.\footnote{E.g., the classification learning literature beginning with \cite{SHJ1961} and including the neural network ALCOVE \citep{kruschke1992alcove}. See more below.} 

While evolutionary operations like crossover have been used in all these domains, the application of crossover to IOD Graphs remains sparse. We provide a theoretical foundation for crossover across this class of networks. First, we establish how crossover can be applied to any two ore more such networks.
We also connect crossover to informativeness, a measure of the connectedness of inputs to outputs, and provide proofs putting bounds on how informativeness flows from parent networks to their children.

First, we formally define this class of graphs as Input/Output Directed Graphs or IOD Graphs:
An \textit{IOD Graph} is a graph with a set of nodes $N$ and directed edges $E$, where $N$ contains 
(a) a set of ``input nodes'' $I \subset N$, where each $i \in I$ has no incoming edges and any number of outgoing edges, and (b) a set of ``output nodes'' $O \subset N$, where each $o \in O$ has no outgoing edges and any number of incoming edges,
and $I\intersection O = \emptyset$. Nodes $n\in N$ which are neither inputs nor outputs, so $n\notin I, n\notin O$, are called ``intermediate nodes.''

Because of the prominence of neural networks, we will generally describe IOD Graphs as information flow networks and interpret the properties and results from that perspective. However, as mentioned above, these results about crossover could also apply to e.g. municipal water system design. Also, while many neural networks are feed-forward, meaning the networks do not contain any feedback loops, we do not restrict our study or the definition of IOD Graphs to feed-forward networks. 

IOD Graphs solve problems by converting inputs into outputs, so an IOD Graph's ability to solve problems relies in part on the connectedness of inputs to outputs. We introduce a measure of the connected paths from inputs to outputs in `informativeness.' \textit{Informativeness} is a characterization of how information flowing into to the system is utilized, in the sense that it characterizes how many inputs are eventually connected to an output. More precisely, the \textit{informativeness} of an IOD Graph is a characterization of how many paths exist from inputs to outputs.  Informativeness is a key component of the value of an IOD Graph to solve a problem. 
At one extreme, an IOD Graph that has no path from an input to an output is useless at solving problems because it cannot deliver any information from the inputs to the outputs. On the other hand, an IOD Graph could have a path for every input and output pair, plus additional paths that interact, for example, as do the nodes in the layers of a multilayer perceptron.\footnote{Informativeness is also applicable for these other types of IOD Graphs. For example, a municipal water system which has no paths from inputs to outputs would function very poorly as water could not flow.}
(We also, as a corollary, introduce the concept of \emph{actionability}, which is the symmetric measure of how many outputs are eventually connected from an input.)

One problem appropriate for IOD Graphs is the classification learning problem in cognitive science/psychology. A canonical version of this problem was by \citet{SHJ1961}. In their laboratory, 
they showed human subjects objects characterized by a three-dimensional binary vector (e.g. big v. small, dark v. light, square v. triangle) and queried the category, A or B. 
The experimenter knew the true category, and told the human subject whether their guess was right; the error rate over time of different categorizations was measured (as the reader might imagine, some categorizations are easier to learn than others) and these experiments provide a data benchmark that future computational learning models strived to explain. One such explanation was ventured by \citet{kruschke1992alcove}, who approached this problem with a single-layer perceptron/feed-forward neural network with three inputs, two outputs, and nine nodes in the ``hidden layer,'' for a total of $14$ nodes. 
 The set of all IOD Graphs with $14$ nodes is vast: assuming at most one link between nodes, there are roughly $2^{132} \approx 5.4 \cross 10^{39}$ different possible IOD Graphs with that number of inputs, outputs, and intermediate nodes.
Growing potential solutions in this space via evolutionary methods seems wise. The DIVA model \citep{kurtz2007divergent} is another IOD Graph in this literature, which would model the SHJ problem as $3$ inputs and with two sets of $3$ outputs for a total of $6$ outputs. A crossover method like the one we describe in this paper could, for example, breed IOD Graphs from ALCOVE and DIVA implementations, possibly providing useful new solutions.

The purpose of this paper is to rigorously define a crossover operation and (begin to) characterize how informativeness and actionability changes under this crossover operation. That is, we seek to 
lay a strong theoretical foundation for crossover on a broader class of graphs, of which feed-forward neural networks, municipal water systems, electric circuits, etc., are special cases, and relate these different methods to informativeness, which we believe helps capture some of the effectiveness of IOD Graphs to process information. 

Since all neural networks are IOD Graphs, this paper is contribution to the substantial research effort applying crossover to neural networks.
Starting with the perceptron \citep{rosenblatt1957perceptron}, the vast majority of neural networks explored, including modern deep nets, are feed-forward networks. 
While more sophisticated computation is possible when networks include recurrent links,
there has yet to emerge a comprehensive theory of how the computational power of networks with recurrent links can be achieved. There has been much empirical research aimed at evolutionary computation for this task; for a recent summary, see \citet{stanley2019designing}.

The crossover process described here is applied on the graph directly and does not use evo-devo.
Evo-devo (or 
Evolutionary Development)
is a gene-driven process  in which a genome representation is translated into phenotypes, in this case, graphs, and crossover is done on the genomes, not the graphs themselves
  \citep{jacob1977evolution,gould1977ontogeny}.
This direct crossover on network structure may 
foster Lamarckian evolution and may be compatible with some evo-devo crossover methods
(see Section~\ref{sub:evo_devo}).

We formally define IOD Graphs and informativeness in Section~\ref{sec:definitions}.
We formally define the crossover operation in Section~\ref{sec:the_crossover_operation}, including the core definitions (Section~\ref{sub:crossover_related_definitions}) and a discussion of \emph{crossover compatibility} (Section~\ref{sub:crossover_compatibility}), which characterizes when two IOD Graphs can be subject to the crossover operation.
We apply this crossover operation to informativeness in Section~\ref{sec:the_preservation_of_informativeness}, `The Preservation of Informativeness,' which proves the core theorems. 
Section~\ref{sec:actionability} briefly extends the analysis from informativeness to a related idea of \emph{actionability}, which characterizes the amount of informed behavior flowing
out of a system,  in the sense that it characterizes how many outputs are connected from an input.
Section~\ref{sec:discussion} discusses aspects of the results, such as the competing conventions problem and the distribution of informativeness categories across networks of different degree.
Section~\ref{sec:conclusion} concludes.

\section{Definitions} 
\label{sec:definitions}

\begin{definition}\label{def:io-directed-graph}
	An \textit{Input/Output Directed Graph} or \emph{IOD Graph} is a graph with a set of nodes $N$ and directed edges $E$, such that $N$ contains two subsets:
	\begin{itemize}
	    \item     A set of nodes $I \subset N$, where each $i \in I$ has no incoming edges and any number of outgoing edges, and
	    \item     A set of nodes $O \subset N$, where each $o \in O$ has no outgoing edges and any number of incoming edges, and where $O \intersection I = \emptyset$
	\end{itemize}
\end{definition}
The set $I$ are inputs into the system and the set $O$ are outputs. We name the nodes which are neither inputs nor outputs as intermediate nodes. Specifically, $N\backslash \left(I\union O\right)$ is the set of intermediate nodes. We do not require the set of intermediate nodes to be non-empty.

Figure~\ref{fig:iodgraphexample1} depicts an example IOD Graph. 
A perceptron with any number of layers is an example of an IOD Graph; Figure~\ref{fig:perceptronexample} depicts a single-layer perceptron. Perceptrons are used to solve various problems relating input variables to output variables. We consider IOD Graphs as a larger class of networks that could plausibly be used to solve those same kinds of problems. 

One property we wish to define on IOD Graphs is how the inputs are connected to the outputs via directed paths.  We define \emph{informativeness}, which involves the connections via directed paths from the input nodes to the output nodes: 
 A \textit{partially informative IOD Graph} has at least one path from an input to an output, 
 a \textit{very informative IOD Graph} has a path from every input to an output, 
 and a \textit{fully informative IOD Graph} has a path from every input to every output. 
 On the other extreme, a \textit{non-informative IOD Graph} has no paths from inputs to outputs. A non-informative IOD Graph is worthless as a problem-solving engine.

\begin{figure}[h]
    \centering
    \begin{subfigure}[t]{0.45\textwidth}
        \centering
        \includegraphics[width=.9\textwidth]{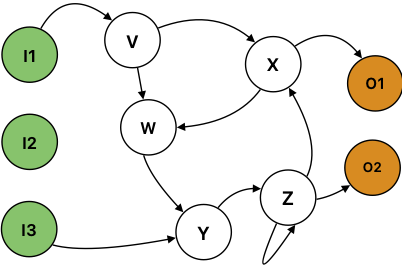}
        \caption{A Partially Informative IOD Graph}
		\label{fig:iodgraphexample1}
    \end{subfigure}%
    ~ 
    \begin{subfigure}[t]{0.45\textwidth}
        \centering
        \includegraphics[width=.9\textwidth]{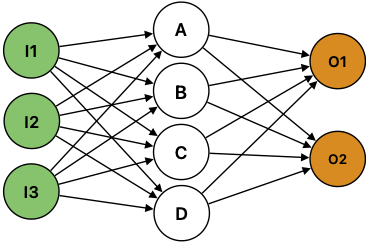}
        \caption{A Fully Informative IOD Graph\\(A Single-layer Perceptron)}
		\label{fig:perceptronexample}
    \end{subfigure}
    \caption{Two Input/Output Directed Graphs of Varying Informativeness}
\end{figure}
Figure~\ref{fig:iodgraphexample1} is a \emph{partially informative graph}: For example, $I1 \rightarrow V \rightarrow W \rightarrow Y \rightarrow Z \rightarrow O2$ is a directed path which leads from $I1 \in I$ to $O2 \in O$. However, it is not \emph{very informative}, because there is no path from $I2$ to any output, and it is not fully informative because very informative is a necessary condition for \emph{fully informative}. If the IOD Graph were modified by adding a directed edge from $I2 \rightarrow W$, then this modified IOD Graph would be very informative.
A multi-level perceptron with non-zero weights on its edges, for example depicted in Figure~\ref{fig:perceptronexample}, is fully informative. Every input has a directed path to every output.



 \section{The Crossover Operation} 
 \label{sec:the_crossover_operation}
\subsection{Crossover-related Definitions} 
\label{sub:crossover_related_definitions}

We define a crossover operation on IOD Graphs
in which we find subgraphs with matching sets of forward and backward directed links to ``swap.'' Here we define the relevant terms to fully define the crossover operation. In the genetic algorithm,  the one point crossover operation 
splits the two parent genomes, and then
splices
the initial part of one parent's genome and 
the latter part of the other parent's genome.
Similarly,
we split two IOD Graphs into an input part and an output part
and then splice
 the input part of one IOD Graph to the output part of the other.
 
In order to define this operation, we must define \emph{first}, how to split an IOD Graph, and \emph{second,} how to splice them.

The natural way to split an IOD Graph is to partition its nodes into two non-overlapping sets, where all input nodes are in one part and all output nodes are in the other part.\footnote{Definition: A \emph{partition} $(S_1,S_2)$ of the set $S$ requires  that $S_1 \union S_2 = S$ and  $S_1 \intersection S_2 = \emptyset$.} 
We define an \emph{IO Partition} as follows:
\begin{definition}\label{def:partition}
Let $G$ be an Input/Output Directed Graph, with corresponding subsets of nodes $I$ and $O$. Then define an \emph{IO Partition} of $G$ to be a partition  $(\inpart{},\outpart{})$ of $N$, 
where $I \subseteq \inpart{}$ and $O \subseteq \outpart{}$.
\end{definition}
For any IO Partition $(\inpart{},\outpart{})$, we refer to $\inpart{}$ as the input part and $\outpart{}$ the output part of the partition.

Given this definition of an IO partition, we define
$M(\inpart{},\outpart{})$ as the set of all edges which connect the parts of the partition. This is called a \emph{cut} in network theory. Using $(n_1, n_2)$ to denote a directed edge from $n_1$ to $n_2$ and E to denote the list of all edges, we can formally define $M(\inpart{},\outpart{})$ as follows:

For all $e \in E$, where $e = (n_1,n_2)$ or $e = (n_2,n_1)$
	\begin{align*}
		M(\inpart{}, \outpart{}) &=  \left\{e \quad 
		\middle| \quad  n_1 \in \Psi, n_2 \in \Omega
	 \right\}
	\end{align*}

We call
$M(\inpart{},\outpart{})$ the \emph{separating membrane} associated with IO Partition $(\inpart{},\outpart{})$.
An IO Partition and corresponding separating membrane are depicted in Figure~\ref{fig:IOPartitionMembraneExample}.

 \begin{figure}[h]
     \centering  \includegraphics[width=.60\textwidth]{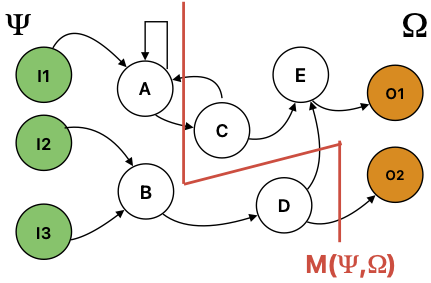}
     \caption{An IO Partition and corresponding membrane\\
     \label{fig:IOPartitionMembraneExample}}
 \end{figure}%
Each $e\in M(\inpart{},\outpart{})$ either connects from the input part to the output part or the reverse. If it connects from the input part to the output part, we call it a \emph{forward link}, because it links from inputs to outputs. If instead it connects from the output to the input, we call it a \emph{backwards link}. Let $F(\inpart{},\outpart{})$ be the set of forward links in a membrane $M(\inpart{},\outpart{})$ and $B(\inpart{},\outpart{})$ be the set of backwards links in $M(\inpart{},\outpart{})$. Clearly, $M(\inpart{},\outpart{}) = F(\inpart{},\outpart{}) \union B(\inpart{},\outpart{})$.

\begin{figure}[h]
	\centering
   \includegraphics[width=.55\textwidth]{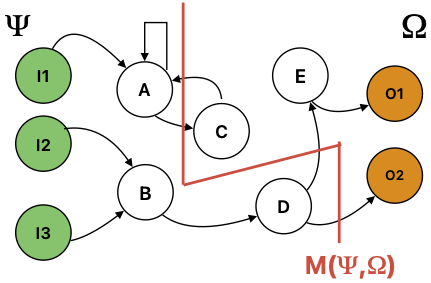}
	\caption{This IO Partition is input-contiguous\\
 but not output-contiguous}
	\label{fig:IOPartitionContiguous}
\end{figure}
We also wish to have a way to describe a partition as cohering within the input or output parts.
\begin{definition}\label{def:contigous_partitions}
		For an IO Partition $(\inpart{},\outpart{})$:
		\begin{itemize}
			\item \emph{The input part $\inpart{}$ is contiguous} (or, equivalently, \emph{$(\inpart{},\outpart{})$ is input-contiguous}) if, for every $n\in \inpart{}\backslash I$, there is a path from some input $i\in I$ to $n$ which lies entirely within $\inpart{}$, and
			\item \emph{The output part $\outpart{}$ is contiguous} (or, equivalently, \emph{$(\inpart{},\outpart{})$ is output-contiguous}) if, for every $n' \in \outpart{}\backslash O$, there is a path from $n'$ to some output $o\in O$ which lies entirely within $\outpart{}$.
		\end{itemize}
		
		If both the input part \inpart{} and output part \outpart{} are contiguous, we say \emph{the IO partition $(\inpart{},\outpart{})$ is contiguous.} 
\end{definition}
Requiring a partition to be contiguous means intermediate nodes in the input part are connected from inputs and intermediate nodes in the output part are connected to outputs.
Figure~\ref{fig:IOPartitionContiguous} depicts an IOD Graph and IO Partition which is input-contiguous but not output-contiguous. Output contiguousness fails because there is no path from $C$ to an output which is contained in \outpart{}.

\subsection{Crossover compatibility} 
\label{sub:crossover_compatibility}

Suppose $G, G'$ are IOD Graphs with IO Partitions $(\inpart{},\outpart{})$ and $(\inpart{}',\outpart{}')$. Then we say these IO Partitions are \emph{crossover compatible} if they have the same number of forward and backward links, and the rather technical assumption (typically easily satisfied) that the input part of one parent shares no nodes with the output part of the other and vice versa.
\begin{definition}\label{def:crossover_compatible}
	IOD Graphs and IO Partitions $\left\{G,(\inpart{},\outpart{}) \right\}$ and $\left\{G',(\inpart{}',\outpart{}') \right\}$ are \emph{crossover compatible} if $\inpart{} \intersection \outpart{}' = \emptyset$,  $\inpart{}' \intersection \outpart{} = \emptyset$, and
		\begin{align*}
			\left|F(\inpart{},\outpart{})\right| &=  \left|F(\inpart{}',\outpart{}')\right|, \textrm{and} \\
				\left|B(\inpart{},\outpart{})\right| &=  \left|B(\inpart{}',\outpart{}')\right|
		\end{align*}
\end{definition}

If IO Partitions $(\inpart{},\outpart{})$ and $(\inpart{}',\outpart{}')$ are crossover compatible, then they can be used to create crossover children. For simplicity, and without lack of generality, we define the crossover child constructed from crossover compatible $\inpart{}$ and $\outpart{}'$. In this case, we call $G$ the input parent, because it contributes the input part, and $G'$ the output parent, because it contributes the output part.%
\footnote{Notably, there may also be a crossover child that can be constructed from $(\inpart{}',\outpart{})$. The process and discussion is identical with the roles of $\inpart{}$ and $\inpart{}'$, and $\outpart{}'$ and $\outpart{}$, reversed.}
 First, we define a crossover membrane $\hat{M}$:
\begin{definition}\label{def:crossover_membrane}
	Suppose $G, G'$ are IOD Graphs with crossover compatible IO Partitions $(\inpart{},\outpart{})$ and $(\inpart{}',\outpart{}')$. Then a $(G,G')$ crossover membrane $\hat{M}$ is defined as as all edges $f''$ and $b''$, where
	\begin{enumerate}
\item For each forward edge $f \in F(\inpart{},\outpart{})$, select, without replacement, a forward edge $f'$ in $F(\inpart{}',\outpart{}')$, and then construct the edge $f''$ which connects the source of $f$ to the destination of $f'$, and
\item For each backward edge $b \in B(\inpart{},\outpart{})$, select, without replacement, a backward edge $b'$ in $B(\inpart{}',\outpart{}')$, and construct the edge $b''$ which connects the source of $b'$ with the destination of $b$.
	\end{enumerate}
\end{definition}
Note that a given pair of crossover compatible IO Partitions $(\inpart{},\outpart{})$ and $(\inpart{}',\outpart{}')$, there must exist  at least one crossover membrane, as shown in Lemma~\ref{lem:CrossoverMembraneExists}.
There may be, and often are, more than one possible crossover membrane. In that case, $\inpart{}$ and $\outpart{}'$ can be connected in meaningfully different configurations.%
%

\begin{lemma}\label{lem:CrossoverMembraneExists}
 Suppose  $G, G'$ are two crossover compatible IOD Graphs. Then there exists a $(G,G')$ crossover membrane.
\end{lemma}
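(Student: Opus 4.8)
The plan is to show existence constructively by exhibiting at least one valid choice of the selections required in Definition~\ref{def:crossover_membrane}. The only content of that definition is: for the forward edges, a selection ``without replacement'' of an edge in $F(\inpart{}',\outpart{}')$ for each edge in $F(\inpart{},\outpart{})$; and symmetrically for the backward edges using $B(\inpart{},\outpart{})$ and $B(\inpart{}',\outpart{}')$. A selection without replacement that exhausts a set and assigns a distinct element to each element of another set is precisely a bijection between the two sets. So the lemma reduces to the existence of a bijection $F(\inpart{},\outpart{}) \to F(\inpart{}',\outpart{}')$ and a bijection $B(\inpart{},\outpart{}) \to B(\inpart{}',\outpart{}')$.

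The key steps I would carry out are: (1) Unpack the hypothesis ``$G,G'$ are crossover compatible,'' which by Definition~\ref{def:crossover_compatible} means (among the node-disjointness conditions, which are not needed here) that $\left|F(\inpart{},\outpart{})\right| = \left|F(\inpart{}',\outpart{}')\right|$ and $\left|B(\inpart{},\outpart{})\right| = \left|B(\inpart{}',\outpart{}')\right|$. (One should note these are finite sets, since $E$ is finite.) (2) Invoke the elementary fact that two finite sets of equal cardinality admit a bijection; fix one such bijection $\phi$ on the forward sets and one such bijection $\psi$ on the backward sets. (3) Use $\phi$ as the rule ``for $f \in F(\inpart{},\outpart{})$, select $f' := \phi(f)$'': because $\phi$ is injective this is genuinely a selection without replacement, and because $\phi$ is surjective every forward edge of $G'$ is used, so the process is well-defined for all of $F(\inpart{},\outpart{})$. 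Then $f''$ is the edge from the source of $f$ to the destination of $\phi(f)$, exactly as in the definition; similarly each $b \in B(\inpart{},\outpart{})$ yields $b''$ from the source of $\psi(b)$ to the destination of $b$. (4) Collect all such $f''$ and $b''$; by construction this set satisfies Definition~\ref{def:crossover_membrane}, hence is a $(G,G')$ crossover membrane, establishing existence.

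I do not expect a genuine obstacle here; this is essentially a bookkeeping lemma whose role is to guarantee that the object referenced in later theorems is non-vacuous. The only point requiring a moment's care is the translation between the informal phrase ``select without replacement'' and the formal notion of a bijection between the forward (resp.\ backward) link sets — and, relatedly, being explicit that equal finite cardinality is exactly what licenses such a bijection, so that the crossover-compatibility hypothesis is used in full and the process terminates having consumed each primed edge exactly once. A secondary, even more minor remark worth including is that the node-disjointness clauses of Definition~\ref{def:crossover_compatible} play no role in this particular existence argument (they matter for the informativeness theorems, not for whether a membrane exists), so the proof uses only the cardinality equalities.
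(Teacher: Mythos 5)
Your proof is correct and is essentially the paper's own argument: the paper likewise fixes crossover compatible IO partitions, uses the cardinality equalities to obtain one-to-one and onto maps $g$ between the forward link sets and $h$ between the backward link sets, and assembles the membrane as the set of edges $(\mathrm{source}, \mathrm{destination})$ induced by those maps. The only cosmetic difference is that the paper's backward bijection runs from $B(\inpart{}',\outpart{}')$ to $B(\inpart{},\outpart{})$ rather than the reverse, which is immaterial.
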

\begin{proof}
Since $G, G'$ are crossover compatible, then select crossover compatible IO Partitions $(\inpart{},\outpart{})$ for $G$ and $(\inpart{}',\outpart{}')$ for $G'$.

	Then define $f,f'$ and $b,b'$ as the sets of forward and backward links; namely let $f,f'$ be $f = F(\inpart{},\outpart{})$ and $f' =F(\inpart{}',\outpart{}')$; and let $b,b'$ be $b = B(\inpart{},\outpart{})$ and $b' =B(\inpart{}',\outpart{}')$.

	By definition of crossover compatible, $|f|=|f'|$ and $|b|=|b'|$. Create two one-to-one and onto mappings $g: f \rightarrow f'$ and $h: b' \rightarrow b$.

\newcommand{\linksource}[1]{\ensuremath{{#1}_{\textrm{source}}}}
\newcommand{\linkdest}[1]{\ensuremath{{#1}_{\textrm{dest}}}}	

	For a link $l$, define \linksource{l} as the source node of $l$ and let \linkdest{l} as the destination node of $l$; so by definition each link $l = (\linksource{1},\linkdest{l})$.

    		Then define $\hat{M}_f$, $\hat{M}_b$ as:
	\begin{align*}
	\hat{M}_f &= \left\{ 
		    (\linksource{x}, \linkdest{g(x)}) \quad \middle| \quad x \in f
		\right\} \\
	\hat{M}_b &=
		\left\{ 
		    (\linksource{y}, \linkdest{h(y)}) \quad \middle| \quad y \in b'
		\right\} \\
	\end{align*}
		$\hat{M}_f \union \hat{M}_b$ is a $(G,G')$ crossover membrane.%
\end{proof}%

\bigskip

Now we define the child of crossover:
\begin{definition}\label{def:crossover_child}
	Suppose $G, G'$ are IOD Graphs with crossover compatible IO Partitions $(\inpart{},\outpart{})$ and $(\inpart{}',\outpart{}')$. Then construct a crossover child $C$ as the graph consisting of the following nodes and edges:
	\begin{itemize}
		\item The set of nodes in $C$ is $\inpart{} \union \outpart{}'$
		\item The set of edges in $C$ is the union of:
		\begin{itemize}
			\item the set of edges $e$ which connect elements of $\inpart{}$, 
			\item the set of edges $e'$ which connect elements of $\outpart{}'$, and
			\item a crossover membrane $\hat{M}$
		\end{itemize}
 	\end{itemize}
\end{definition}

With this definition in hand,
we can define \emph{IOD Graph crossover}
\begin{definition}\label{def:IODGraphCrossover}
	IOD Graph Crossover is any algorithm which transforms any two IOD Graphs with crossover compatible IO Partitions
	into a crossover child.
\end{definition}

Crossover is depicted in Figure~\ref{fig:IODGraphCrossover}.
On the left, we have the Input Parent IOD Graph. 
The set of nodes to the left of $M1$ are the set $\inpart{}$, i.e. 
$\inpart{} = \left\{I1, I2, I3, A, B, C\right\}.$  The set of nodes to the right are the set $\outpart{} = \left\{ C, E, O1, O2\right\}$.  The membrane $M1=M(\inpart{},\outpart{})$ contains three forward links and one backward link: 
$F(\inpart{},\outpart{})=\left\{(A,C), (D,E), (D,O2)\right\}$, $B(\inpart{},\outpart{}) = \left\{(C,A)\right\}$

\begin{figure}[h]
    \centering
    \includegraphics[width=.68\textwidth]{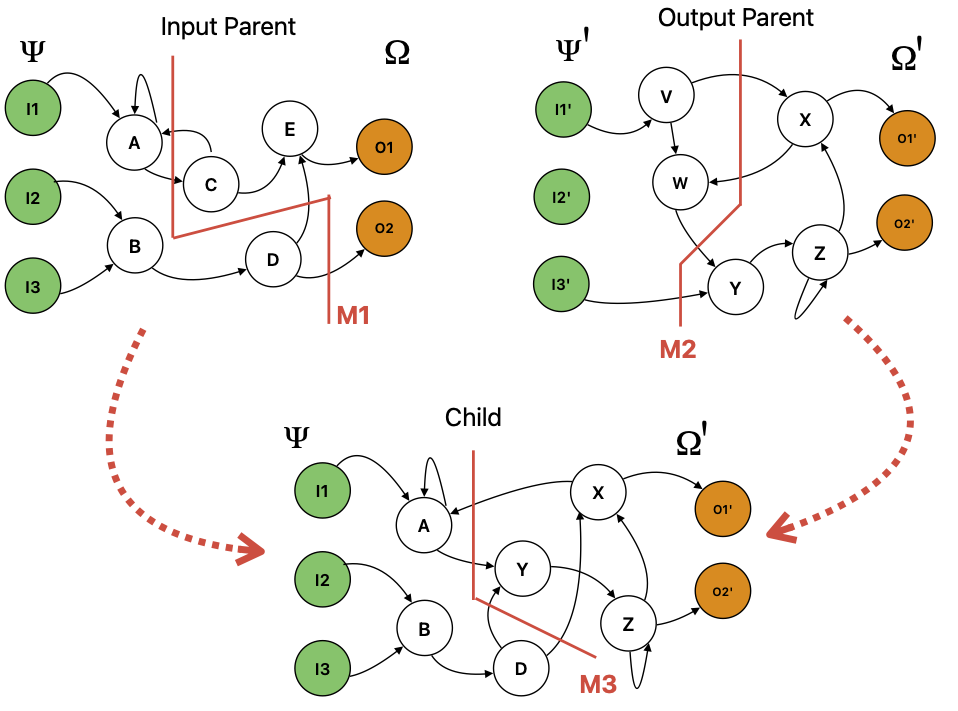}
    \caption{The Crossover Operation on Input/Output Directed Graphs}
    \label{fig:IODGraphCrossover}
\end{figure}%

On the right, we have the Output Parent IOD Graph. $\inpart{}' = \left\{ I'1, I'2, I'3, V, W \right\}$ and $\outpart{}' = \left\{ X, Y, Z, O1, O2\right\}$.
Like $M(\inpart{},\outpart{})$, $M(\inpart{}',\outpart{}')$ contains three forward links and one backward link:
$F(\inpart{}',\outpart{}')=\left\{(V,X), (W,Y), (I3,Y)\right\}$, $B(\inpart{}',\outpart{}') = \left\{(X,W)\right\}$.
The Input Parent and Output Parent are crossover compatible because the membranes $M1=M(\inpart{},\outpart{})$ and $M2=M(\inpart{}',\outpart{}')$ have the same number of forward links (three) and the same number of backward links (one). Since they are crossover compatible, we can construct a crossover membrane $M3$ which connects the input part of the Input Parent (set $\inpart{}$) to the output part of the Output Parent (set $\outpart{}'$). In Figure~\ref{fig:IODGraphCrossover}, we depict crossover membrane $M3 = \left\{ (A,Y), (D,Y), (D,X), (X, A) \right\}$. As with $M(\inpart{},\outpart{})$ and $M(\inpart{}',\outpart{}')$, $M3$ contains three forward and one backward link. Given this crossover membrane, the Child is a well-defined IOD Graph.

Note in this example, the crossover child $C$ is an IOD Graph. This is always true; that is, the set of IOD Graphs is closed under crossover. The proof of this claim is  straightforward. We state this formally in Lemma~\ref{lem:IODGraphsClosedUnderCrossover}:
\begin{lemma}\label{lem:IODGraphsClosedUnderCrossover}
	If $G$ and $G'$ are IOD Graphs, then any child produced by crossover will also be an IOD Graph.
\end{lemma}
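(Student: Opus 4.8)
The plan is to verify that the child $C$ from Definition~\ref{def:crossover_child} satisfies Definition~\ref{def:io-directed-graph}: we must exhibit the node set, the edge set, and two subsets $I_C \subseteq N_C$ and $O_C \subseteq N_C$ with $I_C \cap O_C = \emptyset$ such that every node in $I_C$ has no incoming edges, every node in $O_C$ has no outgoing edges, and both are honest subsets of $N_C$. The natural candidates are $I_C = I$ (the inputs of the input parent $G$, which lie in $\inpart{}$) and $O_C = O'$ (the outputs of the output parent $G'$, which lie in $\outpart{}'$). Since $C$ has node set $\inpart{} \cup \outpart{}'$, both $I_C$ and $O_C$ are subsets of $N_C$; and $I \cap O' = \emptyset$ follows from the crossover-compatibility condition $\inpart{} \cap \outpart{}' = \emptyset$ together with $I \subseteq \inpart{}$, $O' \subseteq \outpart{}'$.

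The heart of the argument is checking the edge conditions, and here I would organize the edges of $C$ by their three sources: (i) edges internal to $\inpart{}$, (ii) edges internal to $\outpart{}'$, and (iii) the crossover membrane $\hat{M}$. For the input condition: a node $i \in I$ has no incoming edges in $G$ at all, so it has none among the internal-$\inpart{}$ edges; it receives nothing from internal-$\outpart{}'$ edges since those only connect nodes of $\outpart{}'$ and $i \notin \outpart{}'$; and for $\hat{M}$, I need to check that $i$ is never the destination of a constructed edge $f''$ or $b''$. A forward crossover edge $f''$ has as its destination the destination of some $f' \in F(\inpart{}',\outpart{}')$, which lies in $\outpart{}'$, hence is not $i$; a backward crossover edge $b''$ has as its destination the destination of some $b \in B(\inpart{},\outpart{})$, which lies in $\inpart{}$ — but it is the destination of a \emph{backward} link, meaning (going from output part to input part) it is in $\inpart{}$, and a backward link of $G$ points \emph{into} $\inpart{}\setminus$(somewhere), so I must confirm it cannot be an input node. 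This holds because inputs have no incoming edges in $G$, so no edge of $G$ — in particular no backward membrane edge of $G$ — terminates at $i$; since $b$ ranges over $B(\inpart{},\outpart{})$, its destination is never an input. The output condition for $o \in O'$ is symmetric: $o$ has no outgoing edges in $G'$; it is not in $\inpart{}$ so sends nothing along internal-$\inpart{}$ edges; and in $\hat{M}$, the source of a forward edge $f''$ is the source of some $f \in F(\inpart{},\outpart{})$, lying in $\inpart{}$, not $o$, while the source of a backward edge $b''$ is the source of some $b' \in B(\inpart{}',\outpart{}')$, which (being the tail of an edge of $G'$) cannot be an output of $G'$ since outputs have no outgoing edges.

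I do not expect a serious obstacle here; the lemma is genuinely routine, as the authors note. The one place to be careful — the ``main obstacle,'' such as it is — is the bookkeeping around backward membrane edges: one must not confuse the roles of source and destination when a backward link is reattached (Definition~\ref{def:crossover_membrane} deliberately splices the source of $b'$ with the destination of $b$, crossing parents), and one must lean on the structural fact that in any IOD Graph the endpoints of membrane links are constrained — a forward link's source is a non-output and its destination a non-input, and vice versa for backward links — precisely because inputs lack incoming edges and outputs lack outgoing edges. Once that observation is in place, every case above closes immediately. I would conclude by remarking that no claim is made about $C$ being informative or even connected — only that it is a well-formed IOD Graph — which is all Definition~\ref{def:io-directed-graph} requires and all that is needed for the crossover operation of Definition~\ref{def:IODGraphCrossover} to be well-defined.
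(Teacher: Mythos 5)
Your proof is correct and follows essentially the same route as the paper's: exhibit $I$ and $O'$ as the input and output sets of the child, note their disjointness via $\inpart{}\cap\outpart{}'=\emptyset$, and check that no edge of $C$ enters an input or leaves an output. The paper's version simply asserts these facts in three lines, whereas you explicitly verify that the crossover membrane edges cannot terminate at an input or originate at an output — a detail the paper leaves implicit — so your write-up is, if anything, the more complete one.
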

\begin{proof}\label{pf:IODGraphsClosedUnderCrossover}
		 Definition~\ref{def:crossover_child} implies that the crossover child $C$ is a graph (that is, a set of nodes and a set of edges). Moveover, it contains a set of inputs $I\subseteq \inpart{}$, where all nodes $i\in I$ have no incoming links,
		  and a set of outputs $O \subseteq \outpart{}$, where are notes $o\in O$ have no outgoing links. 
		  Since $\inpart{} \intersection \outpart{} = \emptyset$, then $I \intersection O = \emptyset$. Therefore the child $C$ is an IOD Graph. \end{proof}

The reader may be interested in applications to feed-forward networks such as neural networks or perceptrons. Like the set of IOD Graphs, the set of feed-forward networks is closed under crossover, as shown in Lemma~\ref{lem:FeedForwardIODGraphsClosedUnderCrossover}:
\begin{lemma}\label{lem:FeedForwardIODGraphsClosedUnderCrossover}
		If $G$ and $G'$ are feed-forward IOD Graphs, then any child produced by crossover will also be a feed-forward IOD Graph.
\end{lemma}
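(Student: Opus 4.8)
The plan is to show that the crossover child $C$ inherits the acyclicity of its two feed-forward parents. Recall that a feed-forward IOD Graph is one with no directed cycles, so it admits a topological ordering of its nodes. The child's node set is $\inpart{} \union \outpart{}'$, and its edges come in three flavors: edges internal to $\inpart{}$ (inherited from $G$), edges internal to $\outpart{}'$ (inherited from $G'$), and the crossover membrane $\hat{M}$. Since $G$ is feed-forward, the subgraph induced on $\inpart{}$ is acyclic; since $G'$ is feed-forward, the subgraph induced on $\outpart{}'$ is acyclic. The membrane edges $f''$ run from a source in $\inpart{}$ to a destination in $\outpart{}'$, and the membrane edges $b''$ run from a source in $\outpart{}'$ to a destination in $\inpart{}$, so the membrane connects the two parts in \emph{both} directions — this is the potential source of a new cycle and will be the main obstacle.

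\textbf{Key steps.} First I would observe that any directed cycle in $C$, if it uses only edges internal to $\inpart{}$, contradicts acyclicity of $G$; similarly for $\outpart{}'$ and $G'$. So any hypothetical cycle in $C$ must cross the membrane, and since it is a cycle it must cross from $\inpart{}$ to $\outpart{}'$ at least once (via some $f''$) and back from $\outpart{}'$ to $\inpart{}$ at least once (via some $b''$). The crux is then to show this cannot happen — which, as stated, it cannot in general \emph{unless} we assume that in each parent the separating membrane has \emph{no backward links}, i.e. $B(\inpart{},\outpart{}) = B(\inpart{}',\outpart{}') = \emptyset$. Indeed, in a feed-forward network, if the IO partition respected the topological order (all of $\inpart{}$ before all of $\outpart{}'$), there could be no backward links; and the definition of crossover membrane constructs $b''$ edges only when the parents have $b$ and $b'$ edges. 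So the honest version of this lemma requires the partitions to be chosen so that $|B(\inpart{},\outpart{})| = |B(\inpart{}',\outpart{}')| = 0$, and then $\hat{M} = \hat{M}_f$ consists solely of forward edges $\inpart{} \to \outpart{}'$.

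Under that hypothesis, I would finish as follows: take a topological order $\sigma$ on the nodes of $G$ restricted to $\inpart{}$, and a topological order $\tau$ on the nodes of $G'$ restricted to $\outpart{}'$, and concatenate them — all $\inpart{}$ nodes first (in $\sigma$ order), then all $\outpart{}'$ nodes (in $\tau$ order). Every internal-$\inpart{}$ edge respects $\sigma$ hence this order; every internal-$\outpart{}'$ edge respects $\tau$ hence this order; and every membrane edge goes from $\inpart{}$ to $\outpart{}'$, hence forward in the concatenated order. Thus the concatenation is a valid topological order for $C$, so $C$ has no directed cycle and is a feed-forward IOD Graph; combined with Lemma~\ref{lem:IODGraphsClosedUnderCrossover} (which already gives that $C$ is an IOD Graph), this completes the proof. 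I expect the subtlety the authors must address — and the part I would flag as the main obstacle — is precisely the treatment of backward links: either the lemma's hypotheses must rule them out, or one must argue that whatever convention produces a feed-forward child forces the membrane to be acyclic in combination with the two parts.
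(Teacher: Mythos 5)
Your instinct about backward links is exactly right, and it exposes a genuine flaw in the lemma as stated rather than a gap in your own argument. The paper's proof runs by contradiction: a cycle in the child cannot lie entirely inside $\inpart{}$ or entirely inside $\outpart{}'$, so it must cross the crossover membrane in both directions, and the proof then asserts that ``the crossover membrane contains no backward links.'' That assertion does not follow from the hypotheses: Definition~\ref{def:crossover_membrane} explicitly constructs backward membrane edges $b''$ whenever the parents' separating membranes contain backward links, and a feed-forward graph can certainly have an IO partition whose membrane has backward links. Concretely, let $G$ be the path $I \rightarrow A \rightarrow B \rightarrow O$ with $\inpart{}=\{I,B\}$ and $\outpart{}=\{A,O\}$, so $F(\inpart{},\outpart{})=\{(I,A),(B,O)\}$ and $B(\inpart{},\outpart{})=\{(A,B)\}$, and let $G'$ be an identical copy $I' \rightarrow A' \rightarrow B' \rightarrow O'$ with the analogous partition. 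These are crossover compatible. The crossover membrane that pairs $(B,O)$ with $(I',A')$ and pairs $(A,B)$ with $(A',B')$ contains the edges $(B,A')$ and $(A',B)$, so the child contains the directed $2$-cycle $B \rightarrow A' \rightarrow B$ and is not feed-forward. So your proposed repair --- adding the hypothesis $B(\inpart{},\outpart{})=B(\inpart{}',\outpart{}')=\emptyset$, which holds automatically whenever the partition respects a topological order of the parent --- is exactly what is needed to make the statement true.

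On method, once the corrected hypothesis is in place your argument and the paper's are essentially dual. The paper argues by contradiction that a cycle would have to use a backward membrane edge; you argue constructively by exhibiting a topological order for the child (all of $\inpart{}$ in $G$'s order, then all of $\outpart{}'$ in $G'$'s order, with every membrane edge pointing forward across the seam). Both are one-step arguments of comparable length, but your constructive version has the advantage of making the required hypothesis impossible to overlook, since the concatenated order certifies acyclicity only when the membrane is purely forward --- which is precisely the point the paper's proof elides.
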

\begin{proof}\label{pf:lem:FeedForwardIODGraphsClosedUnderCrossover}
	Suppose two feed-forward IOD Graphs $G$ and $G'$ produce a child which is not feed-forward. Then there is an edge which causes a loop in a child which was not present in either parent. This loop cannot involve only nodes within \inpart{} nor nodes only within \outpart{}', because $G$ and $G'$ are feed-forward networks. This loop must have been ``created'' by the crossover. This means it must contain one forward and one backward link from the crossover membrane. However, the crossover membrane contains no backward links. Contradiction.
\end{proof}

While the set of feed-forward networks is closed under crossover, the set of multilayer perceptrons is not; crossover can produce children who do not maintain the multilayer structure.
Crossover between perceptrons \emph{can} be guaranteed to produce perceptrons if the set of IO Partitions is restricted such that all nodes in the same layer are in the same part of the partition (e.g. if a node $n\in \inpart{}$ then all $\tilde{n}$ in the same layer as $n$ are also in $\inpart{}$). This proof is omitted.

\section{The Preservation of Informativeness} 
\label{sec:the_preservation_of_informativeness}

There is no guarantee that informativeness of two IOD Graphs are preserved under crossover.
 
 	\begin{figure}[h]
 	    \centering
 	    \includegraphics[width=.5\textwidth]{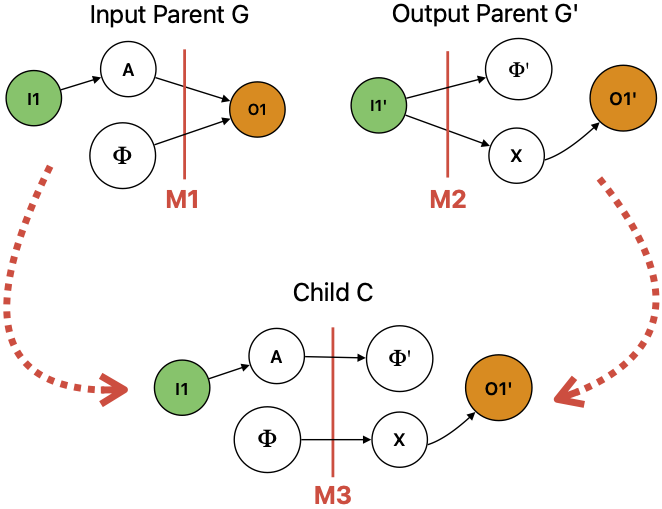}
 	    \caption{Two fully informative parents\\may yield a non-informative child}
 	    \label{fig:IODGraphFullyToNonInform}
 	\end{figure}%
 Figure~\ref{fig:IODGraphFullyToNonInform} illustrates an example with two fully informative parents that each have one input and one output, and a path which connects them. In addition, input parent $G$ has what we call a `false input' $\phi$ and the output parent has a corresponding `false output,' $\phi'$. The figure illustrates how a false input/false output pair can completely disable a path from an input to an output in the child. The crossover operation acts like flipping a switch, breaking flow of information.\footnote{Note: there is an alternative crossover membrane which would preserve full informativeness: namely the links $(A,X)$ and $(\phi, \phi')$ comprise a crossover membrane which results in a fully informative child.}$^,$\footnote{Later, we say that $\phi$ and $\phi'$ are \emph{dangling nodes}. See the \emph{no dangling nodes condition} below (Definition~\ref{def:no_dangling_nodes}).}

The false output ends the flow of information from the input. This `causes' non-informativeness. However, a false output alone is not sufficient for this example. The false input is also required. Without a false input in Input Parent $G$, these IO Partitions would not be crossover compatible and therefore crossover could not be performed. In particular, $\left|F(\inpart{},\outpart{})\right|$ would be two, while $\left|F(\inpart{}',\outpart{}')\right|$ would be one.
Node $X$ would require an incoming link from the membrane but there would be no link to be found.

 The process shown in Figure~\ref{fig:IODGraphFullyToNonInform} generalizes exponentially with inputs and outputs, as shown in
 Theorem~\ref{thm:no_preservation}:
 
\begin{theorem}\label{thm:no_preservation}
	The child of two IOD Graphs may retain no informativeness of the parents. 
\end{theorem}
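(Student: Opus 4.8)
The plan is to generalize the construction in Figure~\ref{fig:IODGraphFullyToNonInform} to an arbitrary number of inputs and outputs, and then show that a particular choice of crossover membrane destroys every input-to-output path in the child. Concretely, I would take an input parent $G$ with inputs $I = \{i_1,\dots,i_k\}$, a single ``real'' output, and $k$ ``false inputs'' $\phi_1,\dots,\phi_k$ (nodes in $\inpart{}$ that are not reached from any true input and whose only membrane edges are forward links); symmetrically, I would take an output parent $G'$ with outputs $O' = \{o_1,\dots,o_m\}$, a single ``real'' input, and a matching collection of ``false outputs'' $\phi'_1,\dots,\phi'_m$ (nodes in $\outpart{}'$ that reach no true output, fed only by forward membrane links). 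Each parent is made fully informative internally so that, in isolation, every input has a path to every output. One then verifies that for suitable IO partitions the forward- and backward-link counts match, so the parents are crossover compatible (this is where the false inputs ``pay for'' the extra forward links demanded by the false outputs, exactly as in the text's remark about $|F(\inpart{},\outpart{})|$ versus $|F(\inpart{}',\outpart{}')|$).

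The key steps, in order, are: (1) fix the two parent graphs explicitly and check that each is fully informative using Definition~\ref{def:io-directed-graph} and the informativeness definitions; (2) exhibit IO partitions $(\inpart{},\outpart{})$ and $(\inpart{}',\outpart{}')$ and compute $|F|, |B|$ on each side to confirm crossover compatibility via Definition~\ref{def:crossover_compatible}; (3) choose a specific crossover membrane $\hat M$ (using Definition~\ref{def:crossover_membrane}) that pairs every forward link \emph{out of a true part of $\inpart{}$} with a forward link \emph{into a false output} $\phi'_j$, and pairs every forward link \emph{out of a false input} $\phi_i$ with a forward link \emph{into a true part of $\outpart{}'$}; Lemma~\ref{lem:CrossoverMembraneExists} guarantees this is a legitimate membrane since it is just one of the one-to-one pairings. (4) In the resulting child, trace paths: any directed path leaving a true input of $\inpart{}$ must cross the membrane, and by construction every such crossing lands in some $\phi'_j$, which by assumption reaches no true output; any path that could reach a true output of $\outpart{}'$ must have entered $\outpart{}'$ through the membrane from some $\phi_i$, but $\phi_i$ is not reachable from any true input. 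Hence no input-to-output path survives, so the child is non-informative by the informativeness definition, completing the proof.

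I would also remark that the construction ``generalizes exponentially'' in the sense the paper alludes to: with $k$ false inputs and $m$ false outputs there are $k!\cdot m!$ distinct membranes, and a large fraction of them (all those respecting the ``true$\to$false, false$\to$true'' pairing pattern) yield a non-informative child, even though an alternate membrane — the ``aligned'' one pairing true with true and false with false — would preserve full informativeness. It suffices for the theorem to exhibit one bad membrane, but stating the count reinforces that this is not a measure-zero pathology.

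The main obstacle I anticipate is bookkeeping around crossover compatibility and the membrane's edge-by-edge definition: I must make sure the false-input/false-output counts are chosen so that $|F(\inpart{},\outpart{})| = |F(\inpart{}',\outpart{}')|$ and $|B(\inpart{},\outpart{})| = |B(\inpart{}',\outpart{}')|$ hold \emph{exactly}, and that the membrane I select genuinely arises from valid without-replacement pairings $g,h$ as in Definition~\ref{def:crossover_membrane} (in particular, I cannot pair a true-side forward link with \emph{two} targets, and every $\phi'_j$ must receive exactly one incoming membrane edge). Getting a clean symmetric parametrization of the two parents up front — e.g. ``$n$ real input/output pairs plus $n$ false ones on each side'' — will keep this manageable, and then the path-tracing argument in step (4) is essentially the same one-line observation as in the figure's caption, repeated for each of the now-many paths.
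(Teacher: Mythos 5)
Your proposal is correct and follows essentially the same strategy as the paper's own proof: both generalize the false-input/false-output construction of Figure~\ref{fig:IODGraphFullyToNonInform}, use the dangling false nodes to balance the forward-link counts so the IO partitions are crossover compatible, and then select a membrane that routes every true-input path into a false output (and every false input into the true output-side paths). The only immaterial difference is that the paper parametrizes the parents to have \emph{arbitrary} informativeness ($J$ inputs, $K$ outputs, and $J\cdot K$ false nodes on each side), whereas you instantiate both parents as fully informative, which still suffices for the theorem as stated.
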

\begin{proof}\label{pf:no_preservation}
	We show a child of two IOD Graphs may retain no informativeness of the parents by creating two IOD Graphs of arbitrary informativeness, then showing they can produce a crossover child which has no informativeness.
	
	Suppose $G$ and $G'$ are IOD Graphs which each have $J$ inputs $I,I'$ and $K$ outputs $O,O'$.	
	$G$ is the input parent and $G'$ the output parent.
	
	Suppose $G$ has the following structure:
	
	For each  input node $i\in I$, there are $K$ paths, called $p(i,k)$, which may have intermediate nodes in common. Path $p(i,k)$ leads from input $i$ to either output $k$ or an intermediate node with no outgoing links. If every path $p(i,k)$ leads from input $i$ to output $k$, $\forall i,k$, then the IOD Graph is fully informative. The more paths end in intermediate nodes, the lower the informativeness. Therefore, any level of informativeness can be achieved by the choice of paths $p(i,k)$ in the construction of $G$.  
	
	In addition to these paths, $G$  has $J\cdot K$ nodes which we will call ``false inputs.'' We name these nodes $\phi(i,k)$, for $i=1, \ldots, J, k=1\ldots K$. 
	False input $\phi(i,k)$ has no incoming edges and one outgoing edge, which connects directly to output $k$. $\phi(i,k)$  is the false input we will use to replace input $i$.
	
	$G$ has no other nodes or edges other than those described above.
	
	The IO Partion $(\inpart{},\outpart{})$ associated with $G$ is: %
	\begin{align*}
	(\inpart{} = N \backslash O,& \outpart{} = O)
	\end{align*}

	Suppose $G'$ has the following structure:
	
	For each  input node $i'\in I'$, there are $K$ paths, called $p'(i',o')$. Path $p'(i',o')$ leads from input $i'$ to either output $o'$ or an intermediate node with no outgoing links. As above, these paths may have nodes in common. Like in graph $G$ on the input part, IOD Graph $G'$ can achieve any level of informativeness.
	
	$G'$ has $J\cdot K$ nodes which we will call ``false outputs.'' We name these nodes $\phi'(i',o')$, for $i'\in I', o'\in O'$. False output $\phi'(i',o')$ has one incoming edge and no outgoing edges. It is connected directly from input $i'$. $\phi'(i',o')$  is the false output we will use to replace output $o'$.

	Like IOD Graph $G$, $G'$ has no other nodes or edges other than those described above.
	
	The IO Partion $(\inpart{}',\outpart{}')$ associated with $G'$ is:  %
	\begin{align*}
	(\inpart{}' = I,& \outpart{}' = N\backslash I)
	\end{align*}
	
	We then construct a crossover membrane $\hat{M}$ which breaks all informative paths by connecting potential informative paths to false inputs and outputs:
	
	First we create a one-to-one and onto mapping from $I$ to $I'$, called $j: I \rightarrow I'$,	
	and a one-to-one and onto mapping $O$ to $O'$ called $k: O\rightarrow O'.$
	
	Second, for each $i\in I$ and $o\in O$, $\hat{M}$ does the following:
	\begin{itemize}
		\item it connects path $p(i,o)$ to false output $\phi'\left(j(i),k(o)\right)$, and
		\item it connects each false input $\phi(i,k)$ to path $p'(j(i),k(o))$.
	\end{itemize} 

Then in the resulting child, every path leading out of each input $i\in I$ leads to false output $\phi'(\cdot)$. This means that the IOD Graph has no informativeness.\end{proof}

\begin{figure}[h]
    \centering
    \begin{subfigure}[t]{0.45\textwidth}
        \centering
        \includegraphics[width=.9\textwidth]{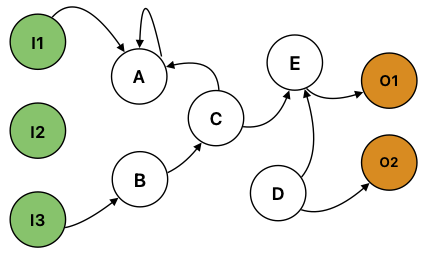}
        \caption{Nodes $A$ and $D$ Dangle}
		\label{fig:dangles}
    \end{subfigure}
	~
    \begin{subfigure}[t]{0.45\textwidth}
        \centering
        \includegraphics[width=.9\textwidth]{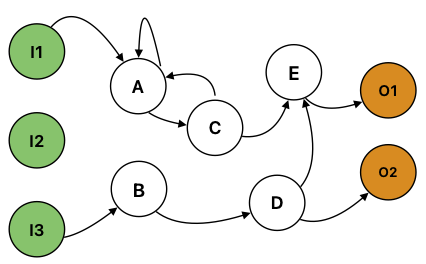}
        \caption{No Dangling Intermediate Nodes}
		\label{fig:nodangles}
    \end{subfigure}%
    \caption{The No Dangling Nodes Condition}
\end{figure}

False inputs and false outputs are what we call `dangling nodes.' Dangling nodes are intermediate nodes which do not lie on a path from an input to an output.\footnote{For example, in Figure~\ref{fig:IODGraphFullyToNonInform},
which illustrates  full informativeness not being preserved, nodes $\phi$ and $\phi'$ dangle.}
To preserve informativeness through inheritance, we use the property which rules out dangling nodes, the \emph{no dangling nodes condition}:
%
%
\begin{definition}\label{def:no_dangling_nodes}
	An IOD Graph satisfies the \emph{no dangling nodes condition} if every intermediate node is on a path from an input $i\in I$ to an output $o \in O$.
\end{definition}
In Figure~\ref{fig:dangles}, both nodes $A$ and $D$ are ``dangling'' because neither is on a path from an input to an output. Figure~\ref{fig:nodangles} satisfies the No Dangling Node condition because every intermediate node falls on a path from an input to an output. Node $I2$ is not on a path to an output, but it is an input, not an intermediate node, so does not violate the condition.

It follows immediately that IOD Graphs which satisfy the No Dangling Nodes condition and have a non-empty set of intermediate nodes are partially informative. This is because any existing intermediate node must lie on a path from an input to an output, and therefore such a path exists. However, an IOD Graph can satisfy the no dangling nodes condition and not be very informative, because there can be inputs which connect to nothing, as seen in Figure~\ref{fig:nodangles}.

Preservation of the No Dangling Nodes condition under inheritance is the cornerstone of informativeness preservation. It turns out that \emph{contiguousness} of parents' IO Partitions is enough to guarantee that the no dangling condition is inherited by the child. (In particular, the input part must be contiguous and the output part must be contiguous.)
Figure~\ref{fig:IODGraphThmNoDangle} illustrates an example of why this property delivers inheritance of the No Dangling Nodes condition. Below, Theorem~\ref{thm:no_dangling_nodes_preserved} states the claim formally and proves it.
	\begin{figure}[h]
	    \centering
	    \includegraphics[width=.68\textwidth]{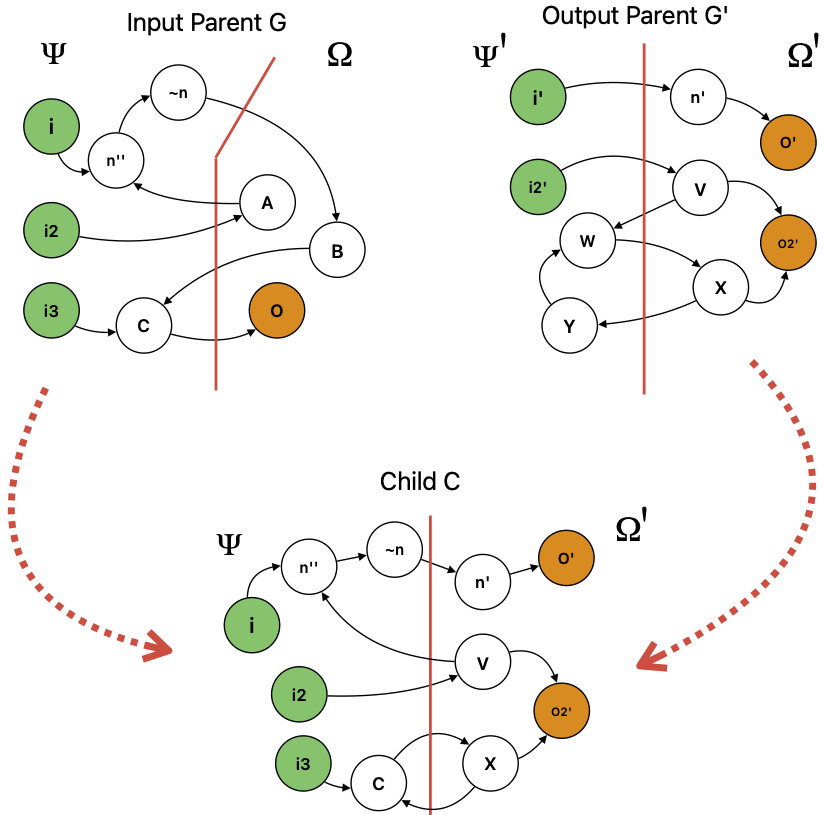}
	    \caption{No Dangling Nodes Condition Inheritance: An Example}
	    \label{fig:IODGraphThmNoDangle}
	\end{figure}%

In Figure~\ref{fig:IODGraphThmNoDangle}, suppose that the node $n''$ is an arbitrary node in the input part \inpart{}. We seek to show that $n''$ must be on a path from an input to an output in the child $C$. In this example, that path turns out to be $i \rightarrow n'' \rightarrow \sim n \rightarrow n' \rightarrow o'$. We build that path in pieces: pieces $p$, $q$, and $r$, which we define below.

 Consider the input parent $G$. Since $G$ is input-contiguous, we are assured of the path from input $i$ to node $n''$, which we call path $p$. Path $p$
	 lies entirely within \inpart{}. Since $G$ satisfies the No Dangling Nodes condition, we are also assured of a path from $i2$ to $O$. In this case, the path is $i2 \rightarrow A \rightarrow n'' \rightarrow \sim n \rightarrow B \rightarrow C \rightarrow O2$.
     This path was selected to illustrate that the path may cross between \inpart{} and \outpart{} multiple times; in particular, we are not assured from No Dangling Nodes alone that the subpath from $i2$ through $\sim n$ to $n''$ resides entirely within \inpart{}.

     There is a part of the path $i2 \rightarrow A \rightarrow n'' \rightarrow \sim n \rightarrow B \rightarrow C \rightarrow O2$ which we use later: the subpath  $n'' \rightarrow \sim n$, which we call path $q$.
	
	Now consider the output parent $G'$ in Figure~\ref{fig:IODGraphThmNoDangle}. Here, the No Dangling Nodes condition assures us that this $n' \in \outpart{}'$ that has been selected must have some path to some output $o'$ that lies entirely within \outpart{}', which we call path $r$.\footnote{Strictly speaking, No Dangling Nodes is not required for the Output Parent $G'$ in this part of the proof. It is required later, when we consider $n'' \in \outpart{}'$.}
	
Note that the IO Partitions of $G$ and $G'$ depicted in Figure~\ref{fig:IODGraphThmNoDangle} are crossover compatible because
$\left|F(\inpart{},\outpart{})\right| = \left|F(\inpart{}',\outpart{}')\right| = 3$ and $\left|B(\inpart{},\outpart{})\right| = \left|B(\inpart{}',\outpart{}')\right| = 2$.

	Consider now Child $C$, we see the path $p \rightarrow q \rightarrow r$ which is $i \rightarrow n'' \rightarrow ~n \rightarrow n' \rightarrow o'$. This is a path which satisfies that node $n''$ lies on a path from an input to an output in Child $C$. This demonstrates that No Dangling Nodes is inherited.

\begin{theorem}\label{thm:no_dangling_nodes_preserved}
	Suppose IOD Graph $G$ has input-contiguous IO partition $(\inpart{},\outpart{})$ and IOD Graph $G'$ has output-contiguous IO partition $(\inpart{}',\outpart{}')$, and further suppose both graphs $G$ and $G'$ satisfy the no dangling nodes condition.
	Then any crossover child produced by $(G,(\inpart{},\outpart))$ and $(G',(\inpart{}',\outpart{}'))$  must also satisfy the no dangling nodes condition.
\end{theorem}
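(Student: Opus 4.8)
The plan is to verify the no dangling nodes condition for the child $C$ directly: take an arbitrary intermediate node $n''$ of $C$ and build a directed path of $C$ that runs from an input of $C$, through $n''$, to an output of $C$. Before the case split I would set up the bookkeeping. By Definition~\ref{def:crossover_child} the node set of $C$ is $\inpart{}\union\outpart{}'$, and crossover compatibility (implicit in the existence of a crossover child) gives $\inpart{}\intersection\outpart{}'=\emptyset$; moreover the crossover membrane adds no edge into a node of $I$ and none out of a node of $O'$, so the inputs of $C$ are exactly $I\subseteq\inpart{}$ and the outputs exactly $O'\subseteq\outpart{}'$. Hence every intermediate node of $C$ lies in $(\inpart{}\backslash I)\union(\outpart{}'\backslash O')$, i.e. it is an intermediate node of $G$ or of $G'$. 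I would also record two facts: (i) every edge of $G$ with both endpoints in $\inpart{}$, and every edge of $G'$ with both endpoints in $\outpart{}'$, is an edge of $C$, so a path confined to $\inpart{}$ or to $\outpart{}'$ is automatically a path of $C$; and (ii) from Definition~\ref{def:crossover_membrane} (cf. Lemma~\ref{lem:CrossoverMembraneExists}) the membrane's forward edges are described by a bijection $g\colon F(\inpart{},\outpart{})\to F(\inpart{}',\outpart{}')$, with $C$ containing, for each $f\in F(\inpart{},\outpart{})$, the ``hopped'' edge from the source of $f$ to the destination of $g(f)\in\outpart{}'$.

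\emph{Case 1: $n''\in\inpart{}\backslash I$, an intermediate node of $G$.} First I would use input-contiguity of $(\inpart{},\outpart{})$ to get a path $p$ from some input $i\in I$ to $n''$ lying inside $\inpart{}$. Next, the no dangling nodes condition on $G$ gives a path in $G$ from $n''$ to some output $o\in O\subseteq\outpart{}$; since $n''\in\inpart{}$ and $\inpart{}\intersection\outpart{}=\emptyset$, this path must leave $\inpart{}$, and I would keep only the prefix $q$ up to its \emph{first} crossing edge $(u,v)$, which is then a forward link in $F(\inpart{},\outpart{})$ with $u\in\inpart{}$ and $q$ confined to $\inpart{}$. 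Now $C$ contains the forward crossover edge $(u,w)$, where $w\in\outpart{}'$ is the destination of $g((u,v))$. If $w\in O'$ I am done; otherwise output-contiguity of $(\inpart{}',\outpart{}')$ gives a path $r$ from $w$ to some $o'\in O'$ inside $\outpart{}'$. Concatenating $p$, $q$, the edge $(u,w)$, and $r$ yields a path of $C$ through $n''$ from an input to an output.

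\emph{Case 2: $n''\in\outpart{}'\backslash O'$, an intermediate node of $G'$.} This is the mirror image. Output-contiguity of $(\inpart{}',\outpart{}')$ gives a path $r'$ from $n''$ to some $o'\in O'$ inside $\outpart{}'$. The no dangling nodes condition on $G'$ gives a path in $G'$ from some input $i'\in I'\subseteq\inpart{}'$ to $n''$; since $n''\in\outpart{}'$, that path must enter $\outpart{}'$, and I would keep only the suffix from its \emph{last} crossing edge $(u',v')$, which is a forward link in $F(\inpart{}',\outpart{}')$ with $v'\in\outpart{}'$ and the suffix confined to $\outpart{}'$. Surjectivity of $g$ yields $f\in F(\inpart{},\outpart{})$ with $g(f)=(u',v')$, so $C$ contains the forward crossover edge from the source $s\in\inpart{}$ of $f$ to $v'$. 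If $s\in I$ I am done; otherwise input-contiguity of $(\inpart{},\outpart{})$ gives a path from some input $i\in I$ to $s$ inside $\inpart{}$. Concatenating that path, the edge $(s,v')$, the suffix from $v'$ to $n''$, and $r'$ again yields a path of $C$ through $n''$ from an input to an output.

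Since every intermediate node of $C$ falls into Case 1 or Case 2, $C$ satisfies the no dangling nodes condition. I expect the one genuinely subtle step to be the treatment of the no-dangling-nodes path inside each parent: such a path may cross the separating membrane arbitrarily many times, so it cannot simply be ``followed'' into the child; the key observation is that cutting it at its first (Case 1) or last (Case 2) membrane crossing always produces a \emph{forward} link, which the crossover membrane then lets me hop across to the other parent, after which contiguity of the appropriate part closes the remaining gap. The only other thing needing care is confirming that the specific crossover edge I invoke is actually present in the membrane, which is exactly what the bijection $g$ in fact (ii) supplies.
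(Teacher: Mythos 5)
Your proposal is correct and follows essentially the same route as the paper's proof: the same two-case split on whether the intermediate node lies in $\inpart{}\backslash I$ or $\outpart{}'\backslash O'$, using contiguity for one half of the path, the no dangling nodes condition truncated at the first (resp.\ last) membrane crossing for the other half, and a hop across the crossover membrane to join them. Your explicit use of the bijection $g$ and the observation that the membrane adds no edges into $I$ or out of $O'$ merely makes precise what the paper asserts more informally.
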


\begin{proof}\label{pf:no_dangling_nodes_preserved}
	Suppose $(G,(\inpart{},\outpart{}))$ and $(G',(\inpart{}',\outpart{}'))$ are crossover compatible. Let the nodes of $G$ include input nodes $I$ and output nodes $O$ and the nodes of $G'$ include of input nodes $I'$ and output nodes $O'$. 
	
	Let $\hat{M}$ be a crossover membrane that is used to construct a crossover child $C$. 
Now, all nodes of $C$ must belong to $\inpart{}$ or $\outpart{}'$. 

Suppose there are no intermediate nodes in $C$. Then the result is immediate.

Now suppose there is at least one intermediate node in $C$. Consider an arbitrary intermediate node in $C$.

\textbf{Case 1.} Suppose that intermediate node $n''\in \inpart{}\backslash I$. 

 Since the IO Partition $(\inpart{},\outpart{})$ is contiguous, there must be a path from some $i\in I$ to $n''$ which is contained within $\inpart{}$. We call this path $p=(i, \ldots, n'')$.
	
	Now, because $G$ satisfies the no dangling nodes condition, $n''$ in $G$ must be on a path from $(\tilde{i}, \ldots, n'', \ldots, \tilde{o})$ for some $\tilde{i}\in I, \tilde{o} \in O$.\footnote{Note that the initial segment of this path $(\tilde{i},\ldots,n'')$ need not lie within \inpart{}, which is why the earlier path segment $p$ is needed.}
	Consider the subpath $(n'',\ldots, \tilde{o})$. Now that path must contain a node in $\outpart{}$. Find the first node in the path $(n'',\ldots, \tilde{o})$ which is in $\outpart{}$, and consider the node $\tilde{n}$ which links to that node.  That is the path $(i, \ldots, n'', \ldots, \tilde{n})$. We define the subpath $q=(n'', \ldots, \tilde{n})$. Note that $q$ resides entirely within $\inpart{}$. This means the path $p \rightarrow q$ resides entirely within $\inpart{}$.
	
	In $C$, $\tilde{n}$ must link to some $n'\in \outpart{}'$ by construction of the crossover membrane $\hat{M}$.  Since the IO partition $(\inpart{}',\outpart{}')$ is contiguous, there must be a path from $n'$  to an output $o'\in O'$ fully contained in $\outpart{}$, called path $r=(n', \ldots, o')$.  
	Since $p$, $q$, and $r$ reside in $C$, then the overall path %
	\begin{align*}
	p \rightarrow q \rightarrow r &= (i, \ldots, n'', \ldots, \tilde{n}, n', \ldots, o' )
	\end{align*}
	must be in $C$ and therefore the no dangling node condition is satisfied for $n''$. 
	
	Note: It is possible in the previous paragraph that $n'$ may be an output, i.e. $n' \in O'$. In this case the corresponding path $R$ is the trivial $r=(n')$ and the rest of the proof follows: path $(i,\ldots, \tilde{n},n')$ is a path which is in $C$ and, since $n'\in O'$, it is a path which connects inputs to outputs.	
	
\textbf{Case 2.} Suppose that the intermediate node $n''\in \outpart{}\backslash O'$

A symmetric argument applies, stated here for completeness:
	Since the IO Partition $(\inpart{}',\outpart{}')$ is contiguous, there must be a path from $n''$ to some $o'\in O'$ which is contained within $\outpart{}'$. We shall call this path $p'=(n'', \ldots, o')$.
	 Now, $n''$ in $G'$ is on a path from $(\bar{i}, \ldots, n'', \ldots, \bar{o})$ for some $\bar{i}\in I', \bar{o} \in O'$. Consider the path $(\bar{i},\ldots, n'')$. Now that path must contain a node in $\inpart{}'$. Find the last node in the path $(\bar{i},\ldots, n'')$ which is in $\inpart{}$, and consider the node $\bar{n}$ which links from that node. This is the path $(\bar{n}, \ldots, n'', \ldots, \bar{o})$. We name the subpath $q'=(\bar{n}, \ldots, n'')$. Note that $q'$ resides entirely within $\outpart{}'$ and therefore $q'\rightarrow p'$ also resides entirely within $\outpart{}$.
 In $C$, there must be some some $n\in \inpart{}$ which links to $\bar{n}$. By the IO partition $(\inpart{},\outpart{})$ being contiguous, there must be a path an input $i\in I$ to $n$ fully contained in $\inpart{}$, which we call $r'=(i, \ldots, n)$. Therefore the path %
\begin{align*}
 r'\rightarrow q' \rightarrow p' &= (i, \ldots, n, \bar{n}, \ldots, n'', \ldots, \bar{o} )
\end{align*}
must be in $C$ and therefore the no dangling node condition is satisfied for $n''$.

\medskip
	
Given Case 1 and Case 2, we have demonstrated that all intermediate nodes in $C$ must lie on a path in $C$ from some input in $I$ to some output in $O'$. Therefore, the no dangling nodes condition is satisfied.%
\end{proof}


When Theorem~\ref{thm:no_dangling_nodes_preserved} applies (i.e. the contiguousness requirements), we are able to demonstrate two ways informativeness is inherited: If the parents are partially informative, the child will be as well. And if the input parent is very informative, and the output parent is partially informative, the child will be very informative. The first theorem follows from the No Dangling Nodes condition both being inherited and implying partial informativeness. The second goes beyond that: it relies on the fact that the very informative input parent has a path from each input to the membrane, no matter where the membrane falls; so if the output parent provides a crossover compatible output part, then it must pick up each one of those paths and connect them to an output. 

To prove these theorems, we establish two lemmas.
 Lemma~\ref{lem:pathcut} points out that, if there is a path from an input to an output in some graph $G$, then that path must go through the membrane associated with any IO Partition of $G$. The intuition is straightforward: if the membrane is a fence between inputs from outputs, then a path from an input to an output must cross that fence. Formally:
\begin{lemma}\label{lem:pathcut}
	If IOD Graph $G$ satisfies the no dangling nodes condition and has at least one intermediate node, then for every path $p$ from an input to an output, and for any IO Partition $(\inpart{},\outpart{})$, there exists some forward edge $f$ in $p$ such that $f\in F(\inpart{},\outpart{})$.
\end{lemma}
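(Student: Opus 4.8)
The plan is to argue directly from the definitions: a path $p = (i, n_1, n_2, \ldots, n_k, o)$ starts at an input $i \in I \subseteq \inpart{}$ and ends at an output $o \in O \subseteq \outpart{}$, so its first node lies in $\inpart{}$ and its last node lies in $\outpart{}$. Since $(\inpart{},\outpart{})$ is a partition of $N$, every node on the path lies in exactly one of $\inpart{}$ or $\outpart{}$. First I would consider the sequence of nodes along $p$ and look at the ``membership label'' ($\inpart{}$ or $\outpart{}$) of each one; the first label is $\inpart{}$ and the last label is $\outpart{}$, so somewhere along the path there must be a consecutive pair of nodes $(n_j, n_{j+1})$ with $n_j \in \inpart{}$ and $n_{j+1} \in \outpart{}$. (Formally: let $j$ be the largest index such that $n_j \in \inpart{}$; since $o \in \outpart{}$ this $j$ is not the last index, and by maximality $n_{j+1} \in \outpart{}$.)

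Next I would observe that the edge $e = (n_j, n_{j+1})$ is, by construction, an edge of $E$ going from a node in $\inpart{}$ to a node in $\outpart{}$, so $e \in M(\inpart{},\outpart{})$ by the definition of the separating membrane, and moreover it is a \emph{forward} link since it is directed from the input part to the output part — hence $e \in F(\inpart{},\outpart{})$. Taking $f = e$ gives the required forward edge lying on $p$.

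I do not expect a genuine obstacle here — this is essentially the standard ``a path crossing a cut uses a cut edge'' argument, specialized to the directed setting where we additionally track orientation to land in $F$ rather than merely $M$. The one place to be slightly careful is the hypotheses: the lemma assumes the no dangling nodes condition and at least one intermediate node, but these are not actually needed for the crossing argument itself — they are presumably included because the lemma is only \emph{used} in that regime (and because a graph with no intermediate nodes and $\inpart{} = I$, $\outpart{} = O$ would make some later steps degenerate). So I would simply run the crossing argument without invoking those hypotheses, perhaps remarking that they are inherited from the context in which the lemma is applied. The only other subtlety is ensuring the path is a genuine directed path so that each consecutive pair $(n_j, n_{j+1})$ is actually a directed edge in $E$ oriented the way we need; this is immediate from the definition of a directed path.
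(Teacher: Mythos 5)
Your proposal is correct and takes essentially the same route as the paper: the paper's proof is exactly the ``path crossing a cut'' argument, asserting that some node $n$ of the path in $\inpart{}$ must connect to a node $n'$ of the path in $\outpart{}$, yielding a forward edge; your version merely makes the choice of crossing pair explicit via the last index lying in $\inpart{}$. Your side remark that the no-dangling-nodes and intermediate-node hypotheses are not actually used in the argument is also consistent with the paper's proof, which does not invoke them either.
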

\begin{proof}\label{pf:pathcut}
	Since there is a path from some input $i$ to some input $o$, there must be some node $n$ on the path in $\inpart{}$ which connects to some node on the path $n'$ in $\outpart{}$ (note, we are allowing $n=i$ or $n'=o$). Then edge $(n,n')$ is a forward edge in $F(\inpart{},\outpart{})$.
\end{proof}

Lemma~\ref{lem:membranepath} points out that no dangling nodes means that the forward links contained in any membrane must lie on a path from an input to an output. That is, if every step is a step on a path between an input and an output, then any step across the fence must be part of such a path. 
\begin{lemma}\label{lem:membranepath}
	If IOD Graph $G$ satisfies the no dangling nodes condition and has at least one intermediate node, then for any IO Partition $(\inpart{},\outpart{})$, every forward edge $f\in F(\inpart{},\outpart{})$ must lie on a path from the input set to the output set and the set of forward edges $F(\inpart{},\outpart{})$ must be non-empty.
\end{lemma}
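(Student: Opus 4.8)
The plan is to prove Lemma~\ref{lem:membranepath} by combining the No Dangling Nodes condition with a careful choice of the path whose existence that condition guarantees. First I would dispose of the non-emptiness claim quickly: since $G$ has at least one intermediate node $n$, the No Dangling Nodes condition gives a path from some input $i$ to some output $o$ passing through $n$; by Lemma~\ref{lem:pathcut} that path contains a forward edge in $F(\inpart{},\outpart{})$, so $F(\inpart{},\outpart{})$ is non-empty. (Alternatively, non-emptiness falls out of the main argument below, so I may just fold it in.)

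For the main claim, fix an arbitrary forward edge $f = (n, n') \in F(\inpart{},\outpart{})$, so $n \in \inpart{}$ and $n' \in \outpart{}$. I want to exhibit a path from an input to an output that uses $f$. The natural move is to build it in two halves and splice at $f$. For the first half, I need a path from some input $i \in I$ to $n$; for the second half, a path from $n'$ to some output $o \in O$. To get the first half: if $n \in I$ it is trivial; otherwise $n$ is an intermediate node (it cannot be an output, since outputs have no outgoing edges and $f$ leaves $n$), so by No Dangling Nodes there is a path $(\tilde i, \ldots, n, \ldots, \tilde o)$ through $n$, and I take its initial segment $(\tilde i, \ldots, n)$. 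Symmetrically, for the second half: if $n' \in O$ it is trivial; otherwise $n'$ cannot be an input (inputs have no incoming edges, but $f$ enters $n'$), so $n'$ is intermediate and No Dangling Nodes yields a path through $n'$ whose terminal segment $(n', \ldots, o)$ is what I want. Concatenating the first half, the edge $f$, and the second half gives a walk from an input to an output through $f$; pruning any repeated nodes yields an honest path, and in any case it witnesses that $f$ lies on a path from the input set to the output set.

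The one subtlety I should be careful about is that the path guaranteed by No Dangling Nodes for $n$ need not have anything to do with $\inpart{}$ or the edge $f$ — it is just \emph{some} input-to-output path through $n$ in the ambient graph $G$, and I only extract its initial segment up to the first occurrence of $n$. Likewise the path through $n'$ is used only for its suffix from $n'$ onward. Since both segments live in $G$ and $f \in E$, the spliced object is a walk in $G$, which is all that is needed. I expect the main (very mild) obstacle to be just this bookkeeping: making sure $n$ is genuinely an intermediate node (or an input) so that the condition applies, and symmetrically for $n'$, rather than any real difficulty — the lemma is essentially a direct unpacking of the definition together with the observation that endpoints of a forward edge cannot be an output (resp. input) on the wrong side.
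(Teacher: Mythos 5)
Your proposal is correct and follows essentially the same route as the paper: splice the initial segment of a No-Dangling-Nodes path through $n$ with the edge $f$ and the terminal segment of such a path through $n'$, and get non-emptiness from the existence of an intermediate node plus Lemma~\ref{lem:pathcut}. You are in fact slightly more careful than the paper's version, which applies the condition to $n$ and $n'$ without separating out the cases $n \in I$ or $n' \in O$ or noting that the spliced walk may need pruning to become a path.
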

\begin{proof}\label{pf:membranepath}
	By Lemma~\ref{lem:pathcut}, $\exists f \in F(\inpart{},\outpart{})$. Now, $f$ must connect two nodes $n,n'$ in $G$. By the no dangling nodes condition, there must be a path $p=(i,\ldots,n,\ldots,o)$ and $q=(i',\ldots, n', \ldots, o')$ for $i,i'\in I$, $o,o'\in O$. Then we can construct the path $r=(i,\ldots,n,n',\ldots,o')$. $f$ is on that path and connects the input set to the output set.
\end{proof}

This is a critical component in the retention of informativeness, because if there were available forward links which led to blind alleys, then a path from an input could be directed into a blind alley, as seen in Figures 
\ref{fig:IODGraphFullyToNonInform} and
\ref{fig:dangles}.

Now we are equipped to prove Theorem~\ref{thm:PartialInform}, which shows the inheritance of partial informativeness, and Theorem~\ref{thm:VeryInform}, which shows that a very informative input parent and partially informative output parent yield very informative children.

\begin{theorem}\label{thm:PartialInform}
	Let IOD Graphs $G$ and $G'$ be partially informative and satisfy the no dangling nodes condition. Then any crossover child produced by an input-contiguous IO partition of input parent $G$ and an output-contiguous IO partion of output parent $G'$ must  be partially informative.
\end{theorem}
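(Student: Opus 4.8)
The strategy is to reduce partial informativeness of the child to the existence of a single input-to-output path, and to build that path by stitching together three segments, one living in $\inpart{}$, one forward link across the crossover membrane $\hat{M}$, and one living in $\outpart{}'$. First I would dispatch the degenerate case: if the child $C$ has no intermediate nodes, then since $G$ is partially informative it has at least one intermediate node (any input-to-output path in an IOD graph with $I\intersection O=\emptyset$ and inputs having no incoming / outputs no outgoing edges must pass through an intermediate node), so by Theorem~\ref{thm:no_dangling_nodes_preserved} the no-dangling-nodes condition is inherited by $C$ — but with no intermediate nodes I instead need a direct argument, so I would actually argue partial informativeness by exhibiting a path regardless, as below. (If truly $N=I\cup O$ for the child, partial informativeness just says some forward link exists, which I get from Lemma~\ref{lem:membranepath} applied to $G$ together with the construction of $\hat{M}$.)

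For the main case, I would invoke Theorem~\ref{thm:no_dangling_nodes_preserved}: the hypotheses here (input-contiguous partition of $G$, output-contiguous partition of $G'$, both satisfying no dangling nodes) are exactly its hypotheses, so the child $C$ satisfies the no dangling nodes condition. Then I would show $C$ has at least one intermediate node. Here is where partial informativeness of the parents is used: since $G$ is partially informative it has an input-to-output path, which by Lemma~\ref{lem:pathcut} uses a forward edge $f\in F(\inpart{},\outpart{})$; the source of $f$ lies in $\inpart{}$, and if every node of $\inpart{}$ were an input then $f$'s source would be an input with... — more carefully, I would argue that $G$ partially informative forces $\inpart{}$ or $\outpart{}'$ to contain an intermediate node, since a path $i\to\cdots\to o$ in $G$ has an interior node, and that node lies in $\inpart{}$ or, via the corresponding statement for $G'$, an interior node lies in $\outpart{}'$; as these interior nodes are intermediate in their parent graphs they remain intermediate (neither input nor output) in $C$ because $C$'s input set is $I\subseteq\inpart{}$ and output set is $O'\subseteq\outpart{}'$. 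Having established $C$ has an intermediate node satisfying no dangling nodes, the remark already made in the text after Definition~\ref{def:no_dangling_nodes} finishes it: an IOD graph satisfying no dangling nodes with a non-empty intermediate-node set is partially informative, because that intermediate node lies on an input-to-output path by definition.

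So the skeleton is: (1) handle the no-intermediate-node child separately via Lemma~\ref{lem:membranepath} and the membrane construction; (2) otherwise apply Theorem~\ref{thm:no_dangling_nodes_preserved} to inherit no dangling nodes; (3) show $C$ has $\geq 1$ intermediate node using partial informativeness of a parent; (4) conclude partial informativeness of $C$ from the standing observation that no-dangling-nodes plus a non-empty intermediate set implies partial informativeness.

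The main obstacle I anticipate is step (3) combined with step (1): ruling out the edge case where the child has no intermediate nodes at all. If $\inpart{}\setminus I=\emptyset$ and $\outpart{}'\setminus O'=\emptyset$, then $C$ consists only of inputs and outputs joined by the crossover membrane, and "partially informative" degenerates to "$\hat{M}$ contains a forward link." I would need to verify this from $|F(\inpart{},\outpart{})|=|F(\inpart{}',\outpart{}')|\geq 1$ (nonempty by Lemma~\ref{lem:membranepath} applied to either parent, which is partially informative — though Lemma~\ref{lem:membranepath} as stated assumes an intermediate node exists, so I may need the mild observation that a partially informative graph with no intermediate nodes still has a forward link in every membrane, which follows directly from Lemma~\ref{lem:pathcut}'s proof). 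Once $F$ is nonempty for both parents, Definition~\ref{def:crossover_membrane} produces a forward edge $f''$ in $\hat{M}$ connecting an input-part node to an output-part node, and in the degenerate child that is literally a path from an input to an output. Threading this edge case cleanly, without it swallowing the whole proof, is the one place requiring care; everything else is assembly of prior results.
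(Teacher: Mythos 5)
Your proof is correct, but it takes a genuinely different route from the paper's. The paper's argument runs through Lemma~\ref{lem:membranepath}: partial informativeness of each parent supplies an input-to-output path, that lemma shows every forward edge of each parent's membrane lies on such a path, and the paper concludes that every edge of the crossover membrane $\hat{M}$ lies on an input-to-output path of the child (implicitly using input- and output-contiguity to guarantee the glued-on initial and final segments survive into $C$). You instead reduce everything to the standing observation that no-dangling-nodes plus a non-empty set of intermediate nodes implies partial informativeness: Theorem~\ref{thm:no_dangling_nodes_preserved} hands $C$ the no-dangling-nodes condition, so any intermediate node of $C$ immediately sits on an input-to-output path, and the only residual work is the degenerate child with $\inpart{}=I$ and $\outpart{}'=O'$, which you correctly dispatch by showing $F(\inpart{},\outpart{})\neq\emptyset$ (via the Lemma~\ref{lem:pathcut} argument applied to the partially informative parent) so that $\hat{M}$ contains a forward edge running literally from an input to an output. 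This is cleaner than the paper's proof in the non-degenerate case and makes visible something the paper's proof obscures: there, partial informativeness of the parents is not needed at all, only the no-dangling-nodes and contiguity hypotheses. One sub-argument of yours is wrong and should be cut: the parenthetical claim that every input-to-output path must pass through an intermediate node is false (a direct edge $(i,o)$ is permitted by Definition~\ref{def:io-directed-graph}), and your attempt in step (3) to \emph{derive} an intermediate node of $C$ from an interior node of a parent's path also fails, since that interior node may lie in $\outpart{}$ or $\inpart{}'$ and hence not survive crossover. Neither matters: your case split on whether $C$ has an intermediate node is exhaustive on its own, so step (3) should simply be deleted rather than proved.
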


\begin{proof}\label{pf:Partial}
	By Theorem~\ref{thm:no_dangling_nodes_preserved}, the child $C$ must satisfy the no dangling nodes condition. Since they are partially informative, $G$ and $G'$ each have at least one path from their input set to their output set, which we will name $p=(i, \ldots, o) \in N(G)$ and $p'=(i',\ldots, o') \in N(G')$. By Lemma~\ref{lem:membranepath}, every edge in $M(\inpart{},\outpart{})$ is on a path from $I$ to $O$ and every edge in $M(\inpart{}',\outpart{}')$ is on a path from $I$ to $O$. Therefore, every edge in any crossover membrane $\hat{M}$ must be on a path from $I$ to $O$.
\end{proof}

\begin{theorem}\label{thm:VeryInform}
	Suppose input parent IOD Graph $G$ is very informative and output parent $G'$ is partially informative, and both satisfy the no dangling nodes condition. 
	 Then any crossover child produced by an input-contiguous IO partition of input parent $G$ and an output-contiguous IO partion of output parent $G'$ must be very informative.
\end{theorem}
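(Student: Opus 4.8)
The plan is to prove this directly, by exhibiting for each input node of the child a directed path to an output node of the child. First I would pin down the data: by Definition~\ref{def:crossover_child} and Lemma~\ref{lem:IODGraphsClosedUnderCrossover}, any crossover child $C$ has node set $\inpart{}\union\outpart{}'$, input set $I$ (the inputs of $G$), and output set $O'$ (the outputs of $G'$), with $O'\neq\emptyset$ since $G'$ is partially informative. So ``very informative child'' reduces to showing that for every $i\in I$ there is a directed path in $C$ from $i$ to some $o'\in O'$.

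Fix $i\in I$. Since $G$ is very informative there is a path $p=(i,\ldots,o)$ in $G$ with $o\in O$. This path begins in $\inpart{}$ and ends in $\outpart{}$, so I would follow it to the first moment it crosses the separating membrane: let $n$ be the last node before that crossing, so the prefix $p_0=(i,\ldots,n)$ lies wholly within $\inpart{}$ and the next edge $(n,n')$ is a forward edge of $M(\inpart{},\outpart{})$ (a path from an input to an output must cross any such membrane, cf.~Lemma~\ref{lem:pathcut}; note that we do not even need input-contiguity for this step, since $p_0$ is automatically inside $\inpart{}$). Every edge of $p_0$ joins two nodes of $\inpart{}$, so $p_0$ is a path in $C$ by Definition~\ref{def:crossover_child}.

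Now I would invoke the construction of the crossover membrane. Because $(n,n')\in F(\inpart{},\outpart{})$, Definition~\ref{def:crossover_membrane} places in $\hat{M}$ an edge out of $n$ whose destination is the destination, say $m$, of whatever forward edge $f'\in F(\inpart{}',\outpart{}')$ the membrane matched to $(n,n')$. Since $f'$ is a forward edge of $G'$ we have $m\in\outpart{}'$ and $m\notin I'$; hence either $m\in O'$, in which case $p_0$ followed by the edge $(n,m)$ already finishes the job, or $m$ is an intermediate node of $G'$, in which case output-contiguity of $(\inpart{}',\outpart{}')$ supplies a path $r=(m,\ldots,o')$ to some $o'\in O'$ lying entirely within $\outpart{}'$, which is therefore also a path in $C$. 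Splicing $p_0\rightarrow(n,m)\rightarrow r$ yields a path in $C$ from $i$ to $o'$; since $i$ was arbitrary this shows $C$ is very informative, and the argument does not depend on which crossover membrane was used, so it covers \emph{any} crossover child.

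The main thing to get right is the bookkeeping that makes the three pieces fit together: choosing $n$ at the \emph{first} membrane crossing so the prefix stays in $\inpart{}$; reading off from Definition~\ref{def:crossover_membrane} that $n$ always acquires an out-edge whose destination lands in $\outpart{}'$; and --- the genuinely load-bearing hypothesis --- using output-contiguity (rather than merely the no dangling nodes condition, which would only tell us that $m$ lies on \emph{some} input-to-output path of $G'$, perhaps one that wanders back through $\inpart{}'$) to continue from $m$ to an output without ever leaving $\outpart{}'$. I would also dispose of the degenerate cases $n=i$ (the prefix is the trivial path $(i)$) and $m\in O'$ (the tail is the trivial path $(m)$) in a sentence each. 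A slightly heavier alternative would first apply Theorem~\ref{thm:no_dangling_nodes_preserved} and Lemma~\ref{lem:membranepath} to re-establish that every forward edge of $\hat{M}$ lies on an input-to-output path, but once output-contiguity is used directly that detour is unnecessary.
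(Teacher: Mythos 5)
Your proposal is correct, and its skeleton matches the paper's: use very informativeness of $G$ to get, for each input $i$, a path that must cross the membrane $M(\inpart{},\outpart{})$ via a forward edge (Lemma~\ref{lem:pathcut}), then ride the crossover membrane into $\outpart{}'$ and continue to an output. Where you diverge is the last leg. The paper's proof finishes by citing Lemma~\ref{lem:membranepath} (``every edge must lie on a path between the input set and the output set'') and concluding directly that every $i$ lies on a path to the output set in $C$; as you correctly observe, that lemma applied to $G'$ only guarantees that the destination $m$ of the matched forward edge lies on \emph{some} input-to-output path of $G'$, which may wander back through $\inpart{}'$ and hence need not survive into the child. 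You instead invoke output-contiguity of $(\inpart{}',\outpart{}')$ to produce a continuation $r=(m,\ldots,o')$ lying entirely within $\outpart{}'$, which \emph{is} a set of edges of $C$ by Definition~\ref{def:crossover_child}. This is the right move --- it is the same mechanism the paper itself uses in Case~1 of Theorem~\ref{thm:no_dangling_nodes_preserved} --- and it makes your write-up tighter than the paper's own proof of this theorem. Your other bookkeeping (the prefix up to the first membrane crossing automatically stays in $\inpart{}$, so input-contiguity is not actually load-bearing here; $m\notin I'$ because $m$ is the destination of an edge of $G'$; the degenerate cases $n=i$ and $m\in O'$; disjointness of $\inpart{}$ and $\outpart{}'$ ensuring the spliced walk is a genuine path) is all sound. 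No gaps.
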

\begin{proof}\label{pf:Very}
	Since $G$ is very informative, for each $i\in I$, $\exists$ a path $p(i,\ldots,o_i)$ for some $o_i \in O$. By Lemma~\ref{lem:pathcut}, each path must have an edge in $M(\inpart{},\outpart{})$. By Lemma~\ref{lem:membranepath}, every edge must lie on a path between the input set and the output set. Therefore, in $C$, every $i$ must lie on a path to the output set. Therefore $C$ is very informative.
\end{proof}

\medskip

However, the inheritance of full informativeness proves difficult to guarantee. In particular, the no dangling nodes condition and contiguous IO partitions are insufficient to assure the preservation of full informativeness. The intuition is as follows: Consider some input. Now,  given full informativeness, there must be a path from that input to every output. However, a well-constructed crossover can direct all those paths to the \emph{same} output. This breaks full informativeness. (However, these fully informative parents will yield a very informative child by Theorem~\ref{thm:VeryInform}.) Theorem~\ref{thm:FullyInform} demonstrates how this remapping could occur:
\begin{theorem}\label{thm:FullyInform}
	Full informativeness of the parents is not necessarily preserved even if both parents satisfy the no dangling nodes condition and the IO partitions are contiguous.
\end{theorem}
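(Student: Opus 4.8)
The plan is to exhibit a concrete counterexample: two IOD Graphs that are fully informative, satisfy the no dangling nodes condition, carry contiguous IO partitions, and yet admit a crossover membrane producing a child that is only very informative (not fully informative). The construction should mirror the mechanism described in the intuition above — a ``remapping'' crossover membrane that funnels all of one input's outgoing informative paths into a single output. Concretely, I would take $G$ and $G'$ each with $2$ inputs and $2$ outputs, built so that every input-to-output path crosses the membrane in a simple, transparent way; for instance, let $G$ have inputs $I_1, I_2$ each with two outgoing edges landing (via the input part) on two membrane-source nodes, and symmetrically for $G'$. The key is to choose the forward-link counts to match (so the partitions are crossover compatible) while arranging the bijection between $G$'s forward links and $G'$'s forward links in the crossover membrane so that both paths emanating from $I_1$ in the child terminate at the same output $o'_1$, leaving $o'_2$ reachable only from $I_2$.

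The steps, in order: (1) Specify $G$, its input/output sets, its edges, and an IO partition $(\inpart{},\outpart{})$; verify directly that $G$ is fully informative, satisfies no dangling nodes, and that $(\inpart{},\outpart{})$ is contiguous (both input- and output-contiguous). (2) Do the same for $G'$ and $(\inpart{}',\outpart{}')$, choosing the structure symmetric enough that output-contiguity and full informativeness are immediate. (3) Check crossover compatibility: confirm $\inpart{}\intersection\outpart{}' = \emptyset$, $\inpart{}'\intersection\outpart{} = \emptyset$, and $|F(\inpart{},\outpart{})| = |F(\inpart{}',\outpart{}')|$, $|B(\inpart{},\outpart{})| = |B(\inpart{}',\outpart{}')|$ — I would likely use no backward links at all to keep bookkeeping minimal. (4) Exhibit the specific crossover membrane $\hat{M}$ (i.e., the specific bijection $g$ between forward links) that performs the bad remapping, and then trace every directed path from each input in the child $C$ = $(\inpart{}\union\outpart{}', \ldots)$, showing that $o'_2$ (say) receives no path from $I_1$, so $C$ is not fully informative, while every input still reaches some output, so $C$ is very informative (consistent with Theorem~\ref{thm:VeryInform}).

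The main obstacle I anticipate is not any deep argument but getting the membrane bookkeeping exactly right: because the crossover membrane reconnects \emph{source of $f$} in $\inpart{}$ to \emph{destination of $f'$} in $\outpart{}'$, I need the sources inside $\inpart{}$ to be arranged so that the two links carrying $I_1$'s flow have sources that the chosen bijection both maps to links of $G'$ whose destinations feed (within $\outpart{}'$) only $o'_1$. Making $\inpart{}$ and $\outpart{}'$ just rich enough to force full informativeness in the parents, yet structured so this collision is achievable and easy to verify, is the fiddly part; a clean way to finesse it is to have each input connect \emph{directly} to its membrane-source nodes (so the input part is a depth-one fan-out) and each membrane-destination node connect \emph{directly} to outputs in $\outpart{}'$, so that ``which output does this path reach'' is read off immediately from $\hat{M}$ and a single edge. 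I would also double-check that the trivial no-dangling and contiguity conditions genuinely hold in both parents despite the extra edges needed for full informativeness, since that is precisely the hypothesis the theorem claims is insufficient. Finally, I would remark (as the surrounding text already hints) that an \emph{alternative} crossover membrane on the same parents would have preserved full informativeness — underscoring that the failure is a property of particular children, not of the parent pair.
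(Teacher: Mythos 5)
Your proposal is correct and takes essentially the same route as the paper's proof: both construct fully informative parents whose input--output paths cross the membrane via distinct forward links, and then choose the crossover bijection so that all paths leaving a given input are funneled into the membrane edges feeding a single output, yielding a child that is very but not fully informative. The only difference is that the paper states the construction generically for $J=K$ inputs/outputs with pairwise-disjoint paths, whereas you instantiate it concretely with $J=K=2$; this is a matter of presentation, not of substance.
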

\begin{proof}\label{pf:Fully}
		Suppose IOD Graphs $G$ and $G'$ are fully informative and satisfy the no dangling nodes condition with contiguous IO partitions $(\inpart{},\outpart{})$ and $(\inpart{}',\outpart{}')$. Suppose they both have the name number of inputs $J$ and number of outputs $K$ and suppose $J=K$. 
		
		Suppose that all paths from inputs to outputs are unique and suppose they have no intermediate nodes in common. Name these paths $p(i,o)$ in $G$ and $p'(i',o')$ in $G'$.
		
		Then construct a crossover membrane $\hat{M}$ according to the following algorithm:
		\begin{enumerate}
			\item Choose, without replacement, $i\in I$, and $o'\in O'$.
			\item For each $o\in O$ and $i' \in I'$, find the edge $e\in p(i,o)$ and $e' \in p'(i',o')$ such that $e\in M(\inpart{},\outpart{})$ and $e'\in M(\inpart{}',\outpart{}')$. Construct the edge $e''$ which connects the source of $e$ with the destination of $e'$. Add $e''$ to $\hat{M}$.
			\item Repeat until the sets $I$ and $O'$ have been exhausted.
		\end{enumerate}
	In the implied child IOD Graph $C$, by construction, each $i \in I$ has $J=K$ paths to one output $o' \in O'$ and no other output. Therefore $C$ is not fully informative.
\end{proof}

\section{Actionability} 
\label{sec:actionability}
As described above, informativeness characterizes how information flowing into  the system is utilized, in the sense that it characterizes how many inputs are eventually connected to an output. A symmetric concept is \emph{actionability}, which characterizes the amount of informed behavior flowing out of a system, in the sense that it characterizes how many outputs are connected \emph{from} an input. The levels of actionability mirror those of informativeness:
 A \textit{partially actionable IOD Graph} has at least one path from an input to an output, 
 a \textit{very actionable IOD Graph} has a path from some input to every output, 
 and a \textit{fully actionable IOD Graph} has a path from every input to every output. 
 On the other extreme, a \textit{non-actionable IOD Graph} has no paths from inputs to outputs. 	
 
Obviously, actionability and informativeness are very closely related. 
Indeed, as is apparent from the definition,
IOD Graphs are
 \emph{non-informative} if and only if they are \emph{non-actionable}, 
are \emph{partially informative} if and only if they are \emph{partially actionable}, and 
are \emph{fully informative} if and only if they are \emph{fully actionable}. They differ at the level of \emph{very}, in that IOD Graphs can be \emph{very informative} without being \emph{very actionable} and vice versa.
 
The symmetry of these concepts allows the immediate extension of Theorems~\ref{thm:PartialInform}, \ref{thm:VeryInform}, and \ref{thm:FullyInform} to actionability:
\begin{theorem}\label{thm:PartialAction}
	Let IOD Graphs $G$ and $G'$ be partially actionable and satisfy the no dangling nodes condition. 
	Then any crossover child produced by an input-contiguous IO partition of input parent $G$ and an output-contiguous IO partion of output parent $G'$ must be partially actionable.
\end{theorem}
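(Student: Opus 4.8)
The plan is to reduce Theorem~\ref{thm:PartialAction} to Theorem~\ref{thm:PartialInform} via the symmetry observations already established in the text, rather than re-running the whole membrane argument. Recall from the discussion in Section~\ref{sec:actionability} that an IOD Graph is partially informative if and only if it is partially actionable: both notions simply assert the existence of at least one directed path from some input to some output. So the hypotheses ``$G$ and $G'$ are partially actionable'' and ``$G$ and $G'$ are partially informative'' are literally interchangeable, and likewise the conclusion ``$C$ is partially actionable'' is equivalent to ``$C$ is partially informative.''

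Concretely, first I would invoke the equivalence to restate the hypothesis: since $G$ and $G'$ are partially actionable, they are partially informative, and they satisfy the no dangling nodes condition by assumption. Second, I would apply Theorem~\ref{thm:PartialInform} directly: the input-contiguous IO partition of $G$ and the output-contiguous IO partition of $G'$ produce a crossover child $C$ that is partially informative. Third, I would apply the equivalence once more in the other direction: $C$ partially informative implies $C$ partially actionable. That closes the argument.

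The one point that deserves a sentence of care — and the closest thing to an obstacle here — is making sure the equivalence between ``partially informative'' and ``partially actionable'' is genuinely a definitional identity and not something that itself needs the crossover machinery. It is: both are defined as ``has at least one path from an input to an output,'' so the two properties coincide for \emph{every} IOD Graph, parents and child alike, with no appeal to contiguousness or no-dangling-nodes. Because of this, no analogue of Theorem~\ref{thm:no_dangling_nodes_preserved} or Lemmas~\ref{lem:pathcut}--\ref{lem:membranepath} needs to be re-proved; they are already baked into Theorem~\ref{thm:PartialInform}, which we are citing as a black box. The proof is therefore essentially two lines, and the main risk is merely over-writing it — I would keep it to the three invocations above.

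If a referee wanted a self-contained argument, the fallback would be to copy the proof of Theorem~\ref{thm:PartialInform} verbatim, replacing ``informative'' by ``actionable'' throughout and noting Lemma~\ref{lem:membranepath} already guarantees every crossover-membrane edge lies on an input-to-output path; but given that the paper explicitly sets up the informative/actionable equivalence precisely to avoid such duplication, the short citation-based proof is the intended one.
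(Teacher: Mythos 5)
Your proposal is correct and matches the paper's intent: the paper presents Theorem~\ref{thm:PartialAction} without proof, appealing to the symmetry between informativeness and actionability, and your argument simply makes explicit that ``partially actionable'' and ``partially informative'' are definitionally identical so the result follows immediately from Theorem~\ref{thm:PartialInform}. No gap.
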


\begin{theorem}\label{thm:VeryAction}
	Suppose input parent IOD Graph $G$ is partially actionable and output parent $G'$ is very actionable, and both satisfy the no dangling nodes condition. 
	 Then any crossover child produced by an input-contiguous IO partition of input parent $G$ and an output-contiguous IO partion of output parent $G'$ must be very actionable.
\end{theorem}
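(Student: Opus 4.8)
The plan is to exploit the symmetry between informativeness and actionability that the paper has already established, and in particular to mirror the structure of the proof of Theorem~\ref{thm:VeryInform}. Recall that ``very actionable'' means there is a path from \emph{some} input to \emph{every} output, so the role played by inputs in the informativeness arguments is now played by outputs, and the output parent $G'$ (not the input parent) is the one carrying the ``very'' hypothesis. The first thing I would do is record the two dual lemmas obtained from Lemmas~\ref{lem:pathcut} and \ref{lem:membranepath} by reversing the roles of $I$ and $O$ (equivalently, by reversing all edge directions): under the no dangling nodes condition with at least one intermediate node, (i) every path from an input to an output must contain a \emph{backward}-direction crossing... no — more carefully, by the same cut argument every such path contains a forward edge of the membrane, and (ii) every forward edge of any membrane lies on a path from the input set to the output set. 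These are literally Lemmas~\ref{lem:pathcut} and \ref{lem:membranepath}; no reproving is needed, since those lemmas are already stated symmetrically in $I$ and $O$ at the level of ``there exists a forward edge on the path.''

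Next I would set up the child. By Theorem~\ref{thm:no_dangling_nodes_preserved}, since the input partition $(\inpart{},\outpart{})$ is input-contiguous, the output partition $(\inpart{}',\outpart{}')$ is output-contiguous, and both parents satisfy no dangling nodes, the crossover child $C$ also satisfies no dangling nodes. (If $C$ has no intermediate nodes, handle that degenerate case separately, exactly as in Theorem~\ref{thm:no_dangling_nodes_preserved}: then every output in $O'$ has an incoming edge only from the membrane or is isolated, and a short direct argument using partial actionability of $G$ and very actionability of $G'$ suffices — I would note that in $C$ there is at least one input with an edge into the membrane, and chase it.) For the main case, fix an arbitrary output $o' \in O'$. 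Because $G'$ is very actionable, there is some input $i' \in I'$ with a path $p'(i', \ldots, o')$ in $G'$. By the dual of Lemma~\ref{lem:pathcut}, this path crosses the membrane $M(\inpart{}',\outpart{}')$ via some forward edge $e'$; let $\tilde{n}' \in \outpart{}'$ be its destination, so there is a subpath from $\tilde{n}'$ to $o'$ lying entirely within $\outpart{}'$ (using output-contiguousness, or just the tail of $p'$). In the crossover membrane $\hat{M}$, by construction $\tilde{n}'$ (as the destination of the chosen forward edge $e'$) is the destination of some edge $e''$ whose source is a node $\tilde{n} \in \inpart{}$ lying on a forward edge of $M(\inpart{},\outpart{})$; by Lemma~\ref{lem:membranepath} applied to $G$ (which needs $G$ partially informative — equivalently partially actionable — plus no dangling nodes), $\tilde{n}$ lies on a path from some input $i \in I$ to the output set of $G$, and in particular input-contiguousness of $(\inpart{},\outpart{})$ gives a path from $i$ to $\tilde{n}$ lying within $\inpart{}$. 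Splicing: $i \rightsquigarrow \tilde{n} \to \tilde{n}' \rightsquigarrow o'$ is a path in $C$ from an input to $o'$. Since $o'$ was arbitrary, every output of $C$ is reached from some input, so $C$ is very actionable.

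The step I expect to be the main obstacle is the bookkeeping around which forward edge of $\hat{M}$ actually receives the tail of $p'$, and making sure the source node $\tilde{n}$ we land on is genuinely reachable from an input \emph{within} $\inpart{}$. The subtlety — the same one flagged in the discussion around Figure~\ref{fig:IODGraphThmNoDangle} — is that a path witnessing no dangling nodes in $G$ may weave in and out of $\inpart{}$, so I cannot simply take ``the'' input-to-$\tilde{n}$ segment of that path; I must invoke input-contiguousness of $(\inpart{},\outpart{})$ to get a clean in-$\inpart{}$ path to $\tilde{n}$, and I must check $\tilde{n} \notin I$ or handle $\tilde{n} = i$ trivially. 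Everything else is a direct dualization of the proof of Theorem~\ref{thm:VeryInform}, so I would keep the write-up short, explicitly citing Theorem~\ref{thm:no_dangling_nodes_preserved}, Lemma~\ref{lem:pathcut}, and Lemma~\ref{lem:membranepath}, and remarking that the argument is the mirror image of that theorem's proof under reversal of all edges.
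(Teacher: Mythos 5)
Your proposal is essentially the proof the paper intends: the paper states Theorem~\ref{thm:VeryAction} without proof, deferring to the symmetry with Theorem~\ref{thm:VeryInform}, and your write-up is precisely that dualization (fix an output $o'$ of $G'$, use very actionability of $G'$ to find a path into it, walk backwards across the crossover membrane into $\inpart{}$, and use input-contiguousness to complete the path from an input). The splicing step is sound because the crossover membrane matches the forward links bijectively, so the destination of every forward edge of $M(\inpart{}',\outpart{}')$ does receive an incoming edge in $\hat{M}$ from the source of some forward edge of $M(\inpart{},\outpart{})$, which lies in $\inpart{}$.

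One step needs repair, and it is not the one you flagged. You assert that the tail of $p'$ from $\tilde{n}'$ to $o'$ lies entirely within $\outpart{}'$ ``using output-contiguousness, or just the tail of $p'$.'' Output-contiguousness only guarantees a path from $\tilde{n}'$ to \emph{some} output within $\outpart{}'$, not to the specific $o'$ you fixed; since very actionability requires reaching \emph{every} output, routing to an arbitrary output does not close the argument. And the tail of $p'$ after an arbitrary forward crossing may leave $\outpart{}'$ through a backward link of $M(\inpart{}',\outpart{}')$, and those backward links are not edges of the child. The correct fix is the device from Case 2 of the proof of Theorem~\ref{thm:no_dangling_nodes_preserved}: take $e'$ to be the \emph{last} forward crossing of $M(\inpart{}',\outpart{}')$ along $p'$, after which the tail cannot leave $\outpart{}'$ (it would need a further forward crossing to return). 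This asymmetry is worth noting: in Theorem~\ref{thm:VeryInform} it suffices to reach \emph{some} output, so contiguousness alone finishes the job, whereas here the target output is prescribed and the ``last crossing'' argument is genuinely needed. With that substitution your proof is complete; the invocations of Theorem~\ref{thm:no_dangling_nodes_preserved} and of Lemma~\ref{lem:membranepath} for $G$ are harmless but not actually used by the path construction.
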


\begin{theorem}\label{thm:FullyAction}
	The no dangling nodes condition and contiguous IO partitions are insufficient to assure the preservation of full actionability. 
\end{theorem}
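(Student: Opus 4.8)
The plan is to obtain this as an immediate consequence of Theorem~\ref{thm:FullyInform} together with the equivalence, noted at the start of this section, that an IOD Graph is \emph{fully informative} if and only if it is \emph{fully actionable}. Contrapositively, a graph fails to be fully actionable exactly when it fails to be fully informative, so any counterexample to the preservation of full informativeness is automatically a counterexample to the preservation of full actionability. The whole argument therefore reduces to transporting the construction already in hand across this biconditional.

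Concretely, I would take the graphs $G$ and $G'$, the contiguous IO partitions $(\inpart{},\outpart{})$ and $(\inpart{}',\outpart{}')$, and the crossover membrane $\hat{M}$ constructed in the proof of Theorem~\ref{thm:FullyInform}. There $G$ and $G'$ are fully informative --- hence fully actionable --- they satisfy the no dangling nodes condition, and their IO partitions are contiguous, so all the hypotheses of the present theorem are met. The child $C$ built from $\hat{M}$ has, by construction, every input $i \in I$ connected only to the single output $o' \in O'$ and to no other output; thus $C$ is not fully informative, and by the biconditional $C$ is not fully actionable. This shows that the no dangling nodes condition plus contiguity of the parents' IO partitions does not force the child to inherit full actionability, which is exactly the claim.

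An alternative, self-contained route is to mirror the construction of Theorem~\ref{thm:FullyInform}: reverse the direction of every edge and exchange the roles of the input and output parents, so that the roles of ``paths out of inputs'' and ``paths into outputs'' are interchanged, and then choose the crossover membrane so that one input feeds all the outputs while the remaining inputs reach none of them, breaking full actionability directly. I do not expect a genuine obstacle anywhere in this proof; the only point requiring a moment's care is that the crossover child is defined asymmetrically --- an input parent contributing $\inpart{}$ and an output parent contributing $\outpart{}'$ --- so in the explicit mirror construction one must pair the edge reversal with the corresponding swap of parent roles. Since full actionability coincides with full informativeness, the equivalence argument sidesteps even this bookkeeping, so the mirror construction is worth mentioning only for concreteness, not because it is needed.
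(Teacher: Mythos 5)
Your proposal is correct, and it matches the paper's intent: the paper presents Theorem~\ref{thm:FullyAction} without proof, appealing to the symmetry between informativeness and actionability, and your primary argument makes that appeal precise in the cleanest possible way --- since \emph{fully informative} and \emph{fully actionable} are by definition the same property (a path from every input to every output), the counterexample of Theorem~\ref{thm:FullyInform} transfers verbatim, with no edge-reversal or role-swapping needed. Your remark that the mirror construction is only needed at the ``very'' level (where the two notions genuinely diverge) and not at the ``fully'' level is exactly the right observation.
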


These theorems are presented without proof.

\section{Discussion}\label{sec:discussion}

IOD Graphs are a broad class of graphs that can be used to solve problems, and this paper defines a crossover operation on these graphs which can be used for evolutionary computation. 
 The utility of IOD Graphs rests on paths from inputs to outputs, so we define a measure of that connectedness through \emph{informativeness} (and \emph{actionability.})
 We show some levels of informativeness will be preserved under crossover under certain conditions, the most important of which is the \emph{no dangling nodes condition} which rules out blind alleys in the IOD Graph. 

There are several reasons why fully informative IOD Graphs may not be optimal for particular problems. Here are three possibilities. First, as mentioned elsewhere, it may be costly to maintain connections. Second, in the context of e.g. a municipal water system, it may not be best for every input--some fresh water, some grey water--to connect to every output--some local waterways, some a water reclamation plant. Third, suppose the IOD Graph is a decision engine, in which each output triggers a specific action, all of which are known to be sometimes useful. Suppose there are a large number of inputs, many of which are known to be noise. In this case, it may be optimal to have a partially informative and very actionable IOD Graph so that noise is ignored but all actions are available.

One aspect of crossover in this framework is that crossover membranes are not unique; that is to say, given an input part \inpart{} and an output part \outpart{}, there can, in general, be multiple crossover membranes which can yield a child. Moreover, these children might vary in their informativeness. This is known in the literature; \citet{schaffer1992combinations} named this the ``competing conventions problem,'' in the sense that the internal meaning of nodes is contextual and dependent on the particular network structure, and there is nothing which forces that meaning to be consistent across networks. So two parents could have different ``conventions'' with regard to mappings of inputs to outputs which would undermine the effectiveness of the crossover child. Consider 
Figure~\ref{fig:IODGraphCompeteConven}, which depicts a competing conventions problem. Suppose the correct solution is a path from I1 to O1 and a path from I2 to O2.
In this case, both parents solve the problem correctly while the child gets it exactly wrong. From a ``competing conventions'' perspective, this means that the ``convention'' or internal representation in nodes of these paths was not consistent/established across parents, leading to a breakdown during crossover.\footnote{Note, there is also a crossover membrane which produces a child which gets the problem exactly right, namely the crossover membrane $\left\{(N1, N4), (N2, N3)\right\}$.}

	\begin{figure}[h]
	  \centering
	    \includegraphics[width=.7\textwidth]{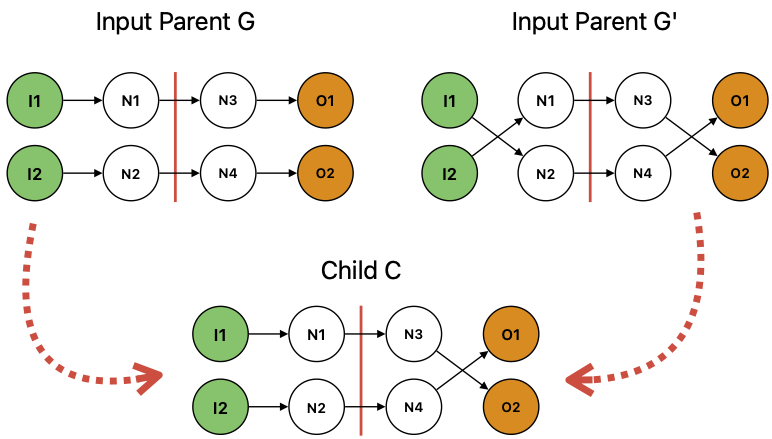}
	  \caption{Crossovers may suffer from a\\competing conventions problem
	  }
	  \label{fig:IODGraphCompeteConven}
	\end{figure}

Along these lines, while crossover can result in a non-informative child from fully informative parents, it's also the case that the reverse is also possible. For example, Figure~\ref{fig:IODGraphNonToFullyInform} depicts two non-informative parents with a fully informative child.

\begin{figure}[h]
  \centering
    \includegraphics[width=.7\textwidth]{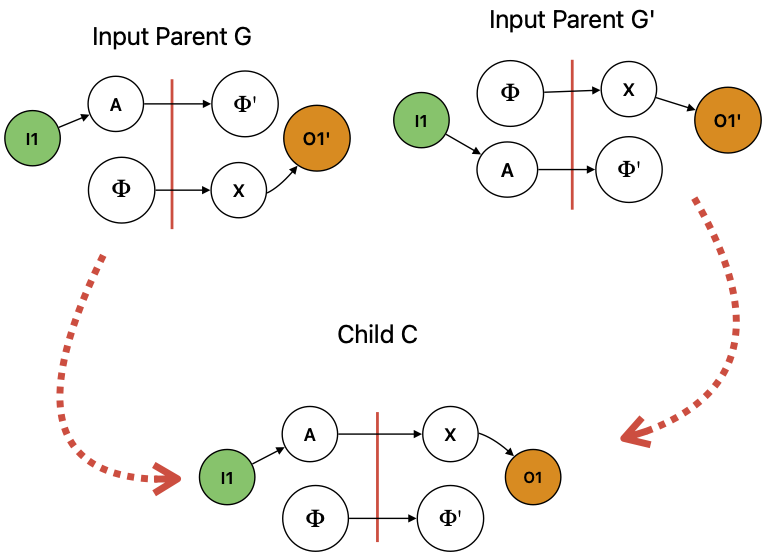}
  \caption{Two non-informative parents\\ can have a fully-informative child.}
  \label{fig:IODGraphNonToFullyInform}
\end{figure}

These two observations combined suggests that the selection of the crossover membrane will be very important as we consider the implementations of in evolutionary computation. One possible algorithmic solution is  to augment the crossover operation with endogenously ``learned'' crossover membranes or extend similarity-based methods such as \citet{dragoni2014simba} to IOD Graphs. We intend to pursue this in future work.

The possibly profound implications of the competing conventions problem also suggests that traditional neural network updating (i.e. excluding crossover) might perform well  precisely because, without crossover, there is a single internal ``convention'' of connections/weights that are encoded in the neural network at any particular point in time. There is no \textit{competing} convention, which would make learning muddied.

In some applications, there are particular kinds of nodes that can or cannot be connected. For example, consider a municipal system of flow delivery pipes, which contain either water or natural gas. One would not want to consider crossovers which connect a water pipe to a natural gas receptor or vice versa. Crossover as defined here can be easily generalized to include this case; nodes could be `tagged' and only connections between nodes of the same `tag' could be considered in the crossover membranes.

Not all networks of interest are IOD Graphs. Social networks, for example, are not IOD Graphs; in social networks, each node is both an input and an output, which is explicitly ruled out in IOD Graphs (where inputs and outputs are disjoint sets). Therefore, none of the properties attributed to IOD Graphs, such as those established in this study, necessarily apply to e.g. social networks. The generalization of the concepts and properties established in this paper to those or other networks is left for future work.


\begin{figure}[h]
    \centering
    \includegraphics[width=1\textwidth]{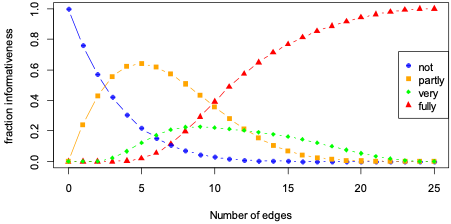}
        \caption{Distribution of IOD Graph Informativeness by Degree Density}
  \label{fig:images_informativeness-degree-graph}
\end{figure}

Figure~\ref{fig:images_informativeness-degree-graph} depicts the distribution of informativeness as the degree density of the underlying IOD Graphs varies. In particular, we consider the universe of all IOD Graphs with 3 inputs, 2 outputs, 5 intermediate nodes, supposing that each input is connected to one node and each output is connected from one node. Then 
Figure~\ref{fig:images_informativeness-degree-graph} depicts, for each number of intermediate node connections, what fraction are Not, Partially, Very, and Fully informative. As we can see, non-informative graphs disappear quickly as degree density increases, but, on the other hand, very informative graphs persist to a fairly high level of degree density. If edges are somehow `costly' to maintain, one might expect an intermediate level of degree density, where partially and very informative graphs together dominate.\footnote{Note that the number of IOD Graphs for each number of edges varies widely, in particular, exponentially. That is, while there is exactly one graph with zero edges, and exactly one graph with 25 edges, there are over 3 million IOD Graphs with 13 edges.}

Informativeness is only one way to measure the connectedness from inputs to outputs; in particular, one could consider metrics describing more subtle measures of connectedness, which count number of paths among possible paths, which could break down not, partial, very, and fully informative graph categories. The analysis of these metrics will be considered in future work.

\begin{figure}[p]
    \centering
    \begin{subfigure}[t]{0.9\textwidth}
        \centering
\begin{tabular}{ccc}
	\includegraphics[width=.22\textwidth]{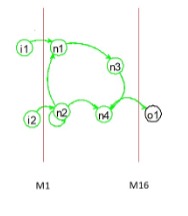} &
	\includegraphics[width=.22\textwidth]{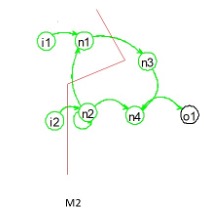} &
	\includegraphics[width=.22\textwidth]{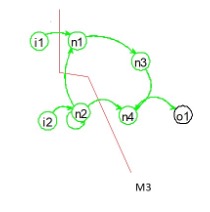} \\
	\includegraphics[width=.22\textwidth]{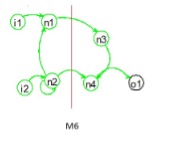} &
	\includegraphics[width=.22\textwidth]{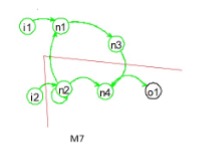} &
	\includegraphics[width=.22\textwidth]{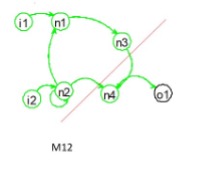}
\end{tabular}
        \caption{Seven Contiguous IO Partitions}
		\label{fig:contiguous}
    \end{subfigure}
	~
	\vspace{3em}
	
    \begin{subfigure}[b]{0.9\textwidth}
        \centering
\begin{tabular}{ccc}
	\includegraphics[width=.22\textwidth]{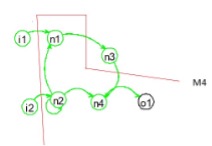}  &
	\includegraphics[width=.22\textwidth]{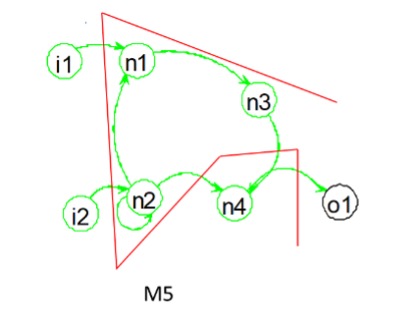}  &
	\includegraphics[width=.22\textwidth]{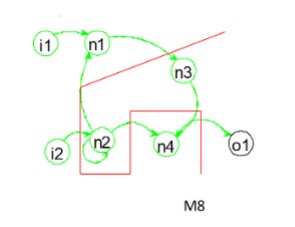}  \\
	\includegraphics[width=.22\textwidth]{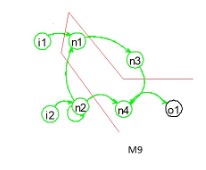}  &
	\includegraphics[width=.22\textwidth]{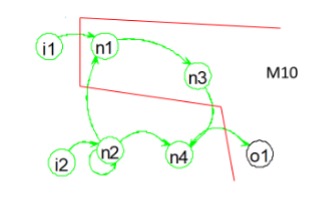} &
	\includegraphics[width=.22\textwidth]{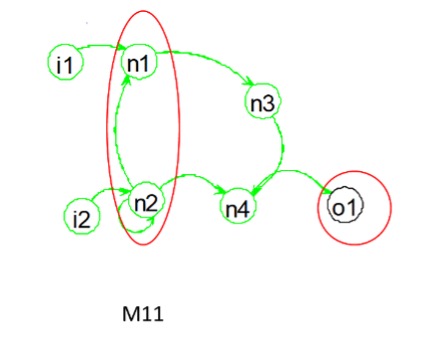} \\
	\includegraphics[width=.22\textwidth]{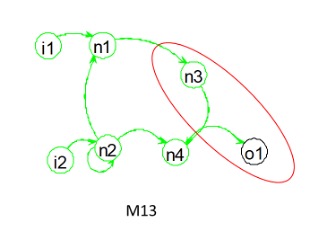} &
	\includegraphics[width=.22\textwidth]{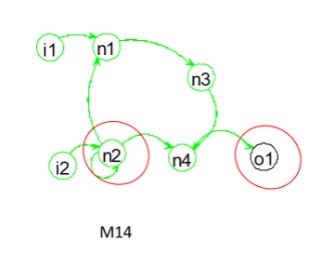} &
	\includegraphics[width=.22\textwidth]{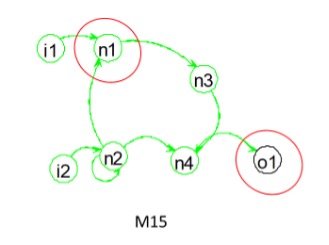}
\end{tabular}
        \caption{Nine Non-contiguous IO Partitions}
		\label{fig:notcontiguous}
    \end{subfigure}%
    \caption{All IO Partitions of an example IOD Graph
	}\label{fig:computational_parts}
\end{figure}

\subsection{Evo-Devo} 
\label{sub:evo_devo}

Evo-devo (short for ``Evolutionary Development'')
is a gene-driven process in which a genome representation is translated into phenotypes, in this case, graphs, and crossover is done on the genomes, not the graphs themselves
 \citep{jacob1977evolution,gould1977ontogeny}.
IOD Graphs are the phenotype of this process.
The genomes in evo-devo, which can have any form, are typically vectors of parameters.
Evo-devo
is common in applications of evolutionary methods to IOD Graphs \cite[e.g.][]{arifovic2001using, garcia2006alternative}.
Some studies, such as 
\citet{dragoni2014simba} and \citet{uriot2020safe}, 
apply crossover to genomes in a network-aware method. In an approach most similar to ours, \citet{uriot2020safe}
build a representation 
between hidden layers of feed-forward neural networks based on functionality and applies crossover to that representation, while our crossover allows for arbitrary crossover cuts and for a larger class of graphs. 

By contrast with evo-devo, the crossover process described here is directly applied on the network topology and therefore is agnostic with respect to the genome representation. IOD Graph Crossover  could lead to improved performance of evolution if the `direct' representation of the graph fosters Lamarckian evolution (in which acquired traits can be inherited to offspring) through maintaining coherent sub-structures. This could be important for e.g. modifying trained neural networks. 

Many existing evo-devo processes will likely not achieve offspring consistent with IOD Graph Crossover 
as described in Definition \ref{def:IODGraphCrossover}. However, 
 if a particular evo-devo process \emph{could} be shown to obey all the properties in Definition \ref{def:IODGraphCrossover}, 
then such a process would therefore follow all theorems here with regard to e.g. informativeness and actionability preservation, as well as any future theorems for IOD Graph crossover. 
By contrast, the crossover method in the study \citet{uriot2020safe}, described above, does not require  preserving the number of links at the crossover as we require for crossover compatibility (Definition \ref{def:crossover_compatible}, Section~\ref{sec:the_crossover_operation}). Therefore the results in this paper do not apply.

 Some studies apply crossover only to the weights of the neural networks, not the graph structure \cite[e.g.][]{braun1993evolving, braunuser, chandra2012crossover}. These are not examples of IOD Graph Crossover, which only considers crossover on structure, not weights.


\section{Conclusion} 
\label{sec:conclusion}

Here we defined Input/Output Directed Graphs to capture a broad class of networks which connect inputs to outputs. This class includes feed-forward neural networks and perceptrons, electrical circuits, municipal water systems, chemical reaction networks, and data flow networks. Various studies have applied evolutionary methods to optimizing these networks in their distinct domains. 
The goal of this study is to provide a strong theoretical foundation and common framework for crossover across all types of IOD Graphs and connect crossover as defined here with our proposed measure of informativeness (and actionability), which provides a measure of the connectedness of inputs to outputs in IOD Graphs. 
There are many directions for future work, e.g. generalized methods of constructing crossover membranes, results for IOD Graphs with nodes of different types or other parameters, generalization of informativeness to other types of graphs, and more nuanced, possibly continuous, measures of informativeness and actionability. 
One area of future work is a computational implementation of crossover on general IOD Graph structures. 
Figure~\ref{fig:computational_parts} shows output of one such computational implementation we are developing.
 In this Figure, all sixteen IO partitions of a particular IOD Graph are generated. We identify contiguousness and construct crossover membranes computationally. A general IOD Graph framework such as this will support the integration of a variety of  methods and applications from different domains and support general evolutionary computation on IOD Graphs.


\newpage
	\bibliographystyle{apalike}
	\bibliography{PapeSchafferSayamaZosh_IODGraphsInformativeness}
	

\typeout{get arXiv to do 4 passes: Label(s) may have changed. Rerun}

\end{document}